\newtheorem{proposition}{Proposition}
\newtheorem{remark}{Remark}
\begin{document}

\title{Analog Computing for Signal Processing\\and Communications -- Part~I:\\Computing with Microwave Networks}

\author{Matteo~Nerini,~\IEEEmembership{Member,~IEEE},
        Bruno~Clerckx,~\IEEEmembership{Fellow,~IEEE}

\thanks{Part of this work has been accepted by the 2025 IEEE Global Communications Conference (GLOBECOM) \cite{ner25-1}. (Corresponding author: Bruno Clerckx.)}
\thanks{This work has been partially supported by UKRI grant EP/Y004086/1, EP/X040569/1, EP/Y037197/1, EP/X04047X/1, and EP/Y037243/1.}
\thanks{Matteo Nerini and Bruno Clerckx are with the Department of Electrical and Electronic Engineering, Imperial College London, London SW7 2AZ, U.K. (e-mail: \{m.nerini20, b.clerckx\}@imperial.ac.uk).}
\thanks{Bruno Clerckx is also with Kyung Hee University, Seoul, Korea.}}

\maketitle

\begin{abstract}
Analog computing has been recently revived due to its potential for energy-efficient and highly parallel computations.
In this two-part paper, we explore analog computers that linearly process microwave signals, named \glspl{milac}, and their applications in signal processing and communications.
In Part~I of this paper, we model a \gls{milac} as a multiport microwave network with tunable impedance components, enabling the execution of mathematical operations by reconfiguring the microwave network and applying input signals at its ports.
We demonstrate that a \gls{milac} can efficiently compute the \gls{lmmse} estimator and matrix inversion, with remarkably low computational complexity.
Specifically, a matrix can be inverted with complexity growing with the square of its size.
We also show how a \gls{milac} can be used jointly with digital operations to implement sophisticated algorithms such as the Kalman filter.
To enhance practicability, we propose a design of \gls{milac} based on lossless impedance components, reducing power consumption and eliminating the need for costly active components.
In Part~II of this paper, we investigate the applications of \glspl{milac} in wireless communications, highlighting their potential to enable future wireless systems by executing computations and beamforming in the analog domain.
\end{abstract}

\glsresetall

\begin{IEEEkeywords}
Analog computing, Kalman filter, linear minimum mean square error (LMMSE) estimator, matrix inversion, microwave networks.
\end{IEEEkeywords}

\section{Introduction}

Computing devices play a pivotal role in our society, serving as the backbone of the technological advancement.
As our world becomes increasingly technology-driven and interconnected, we continuously use computing devices to process data, automate tasks, solve problems, and exchange information.
Modern computers involve digital electronic circuits operating on discrete values (binary) to encode and manipulate data.
Digital circuits and signals are widespread today because of their numerous benefits over their analog counterparts \cite{pro96}, such as versatility, more precision and noise robustness, flexibility in storing and processing data, and security.

Despite their popularity, digital computers face two key limitations that may hinder their ability to meet the ever-increasing demand for computational power.
First, they are characterized by high power consumption, especially due to the presence of power-hungry \glspl{adc} and \glspl{dac}.
Second, their speed is limited by the clock of the digital processors and by the rate of \glspl{adc} and \glspl{dac}.
To overcome these two limitations, analog computing has been recently revived as it enables energy-efficient and massively parallelized computations \cite{cal13}.
As defined in \cite{cal13}, analog computing refers to manipulating analog signals in real time to perform specific operations.
Modern analog computing devices have been proposed in recent years based on optical systems \cite{amb10,sol15}, metamaterials and metasurfaces \cite{sil14,zan21}, and resistive memory arrays \cite{iel18,wan20}.

Analog computers based on optical systems, or optical analog computers, leverage the intrinsic parallelism of light propagation to process data, represented as optical signals, at an exceptionally high rate \cite{amb10,sol15}.
Optical analog computers have been designed to efficiently perform linear operations \cite{mat19} as well as compute the inverse of a given matrix \cite{wu14}.
More complicated tasks have also been performed through optical analog computers, specifically related to image processing and machine learning inference, as reviewed in \cite{wet20}.
In \cite{zho20}, a 2-dimensional spatial differentiator has been built to transform the input optical signal into its second-order derivative, enabling edge detection within an image.
Machine learning inference in the analog domain has been demonstrated in \cite{lin18}, where an optical system based on multiple diffractive layers was proposed to emulate the forward propagation of a \gls{nn}.
However, this system is purely linear, and as pointed out in \cite{wei18}, it lacks the nonlinear activations that are pivotal in \gls{nn}.
Diffractive optics has also been used to build a hybrid optical-electronic convolutional \gls{nn}, in which the convolution operation is performed in the analog domain through a phase mask \cite{cha18}.
Although optical analog computers have demonstrated significant advantages in low-power and ultra-fast processing, they typically rely on large-scale architectures, as their size cannot be reduced below the operational wavelength.

To realize miniaturized analog computers at scales smaller than the operational wavelength, metamaterials have emerged as artificial materials designed to have unusual \gls{em} properties \cite{sil14,zan21}.
Because of their artificially induced properties, metamaterials allow us to manipulate \gls{em} signals within sub-wavelength scales.
Analog computers based on metamaterials have demonstrated the possibility of solving mathematical equations \cite{moh19}, including multiple equations in parallel exploiting input signals at different frequencies \cite{cam21}.
Particular interest has been dedicated to the 2-dimensional counterpart of metamaterials, namely metasurfaces, further reducing the dimensions of the computing devices.
Metasurfaces can be engineered to apply specific transfer functions to the spatial and frequency components of the incident wave, including differentiation, integration, and convolution \cite{sil14,zan21,sol22}.
To perform these mathematical operations, various metasurfaces technologies have been developed, such as space-time digital metasurfaces \cite{raj21} and nonlocal metasurfaces having an angular-dependent response \cite{kwo18,sha23}.
Beyond linear transformations \cite{del18}, metasurfaces have been designed to perform edge detection and pattern recognition, which are crucial operations in image processing \cite{zhu17,wan22}.
The capability of metasurfaces to perform matrix-vector multiplications in the analog domain has been exploited to implement programmable \glspl{nn} \cite{mom23}, performing recognition tasks \cite{wu21-1} and reinforcement learning tasks \cite{liu22}.
Metasurfaces are also expected to play a crucial role in future wireless communications due to their ability to perform computations directly in the wave domain \cite{li22}.
Possible applications of computing metasurfaces in wireless communications include performing sensing tasks and enhancing physical layer security \cite{yan23, oma25}.

Besides metamaterials and metasurfaces, resistive memory arrays have also emerged as a prominent technology for building modern analog computing devices \cite{iel18,wan20}.
Particularly noteworthy are crosspoint memory arrays, in which resistive memory cells are located at the intersections of horizontal and vertical lines in a matrix shape \cite{zho19}.
The values stored in these memory cells can naturally represent the entries of a matrix, enabling the memory array to perform matrix operations, such as matrix-vector multiplications and solving linear algebra problems, directly in the analog domain \cite{sun22,man23,man23b,pan24}.
This approach, referred to as in-memory computing or analog matrix computing, has demonstrated the capability to execute matrix-vector multiplications with arbitrarily high precision while consuming less energy compared to digital computing \cite{son24}.
The parallelism offered by resistive memory arrays in computing matrix-vector multiplications has been harnessed to effectively implement and accelerate \glspl{nn} \cite{bur17,yu18,hae19} and image processing \cite{li18}.
Furthermore, analog matrix computing via resistive memory arrays has been used to advance future \gls{mimo} communications \cite{man22,wan23}.

Several communities have developed modern analog computing solutions to accelerate signal processing operations \cite{amb10}-\cite{liu22}, \cite{sun22}-\cite{hae19} as well as to foster wireless communications \cite{li22,yan23,oma25,man22,wan23}.
However, the efforts of these communities are often siloed, with limited interdisciplinarity, and this fragmentation has led to a lack of a unified model and shared theoretical understanding.
In this two-part paper, we aim to fill this gap, by focusing on analog computers that linearly process microwave signals.
We develop the model of these analog computers by using rigorous microwave theory, characterize their input-output relationship, and explore their application to signal processing (in Part~I) and wireless communications (in Part~II).
Specifically, the contributions of Part~I of this paper are as follows.

\textit{First}, we investigate the fundamental computational limits of a reconfigurable microwave network made of linear components.
To this end, we derive a general model for an analog computer that processes microwave signals satisfying the superposition principle, which we denote as \gls{milac}.
A \gls{milac} can be regarded as a microwave network having multiple ports, where input signals are applied and output signals are read.
Once the input signals are applied, they propagate at light speed within the microwave network, generating the output signals.
In this way, specific operations can be instantly computed by the \gls{milac}, requiring no analog-digital conversion or digital computations.
This is the first effort in providing a theoretical framework unifying previous works \cite{amb10}-\cite{wan23}.

\textit{Second}, we characterize the output of a \gls{milac} as a function of both the input signals and the impedance components constituting its microwave network.
Specifically, we derive the expression of the output signals by using microwave theory, and regarding the \gls{milac} as a multiport microwave network.
The obtained expressions involve matrix-matrix products and matrix inversions, which are computationally expensive operations in the digital domain, as their complexity scales cubically with the matrix size.
Nevertheless, they can be instantly computed by a \gls{milac} once the microwave network is properly reconfigured and the input signals are applied.
While it is known that specific matrix operations can be computed in the analog domain \cite{amb10}-\cite{wan23}, we provide the general expression for the functions computable with microwave signals by a linear network.

\textit{Third}, we show that a \gls{milac} can be used as a computing device to efficiently calculate popular mathematical operations, namely the \gls{lmmse} estimator and the inverse of a given matrix.
Compared to a digital computer, a \gls{milac} can calculate these two operations with a reduction in computational complexity by a factor of $2.5\times 10^4$ and $5.5\times 10^3$, respectively, when the input size is $8192$.
We define a set of primitive operations computable with a \gls{milac}, and show how they can be sequentially used jointly with digital operations to efficiently implement more sophisticated algorithms, such as the Kalman filter.
Specifically, a Kalman filter can be implemented with a \gls{milac} by reducing the computational complexity by $1.1\times 10^4$ times, with input size being $8192$.
The sequential use of primitive operations computed in the analog domain is less explored in literature \cite{amb10}-\cite{wan23}, and can further extend the set of possible functions implementable with analog computers.

\textit{Fourth}, we propose a practical design to implement a \gls{milac}, relying on a microwave network solely constituted of lossless tunable impedance components, i.e., with purely imaginary values instead of complex.
The use of such components is expected to reduce losses and avoid costly active \gls{rf} components, providing the initial guidelines for the practical implementation of a \gls{milac}.
We show that such a \gls{milac} made of imaginary impedance components can execute the same operations as a \gls{milac} built with complex impedance components.

\textit{Organization}:
In Section~\ref{sec:model}, we present the general model a \gls{milac}.
In Section~\ref{sec:analysis}, we analyze a \gls{milac} and characterize its output as a function of its input.
In Section~\ref{sec:lmmse-inv}, we show how a \gls{milac} can efficiently compute the \gls{lmmse} estimator and invert a matrix.
In Section~\ref{sec:kalman}, we show that a \gls{milac} can also efficiently compute the Kalman filter.
In Section~\ref{sec:implementation}, we propose a practical implementation of a \gls{milac}, based on lossless and reciprocal tunable impedance components.
Finally, Section~\ref{sec:conclusion} concludes Part~I of this paper.

\textit{Notation}:
Vectors and matrices are denoted with bold lower and bold upper letters, respectively.
Scalars are represented with letters not in bold font.
$\Re\{a\}$, $\Im\{a\}$, and $\vert a\vert$ refer to the real part, imaginary part, and absolute value of a complex scalar $a$, respectively.
$\mathbf{a}^T$, $\mathbf{a}^H$, $[\mathbf{a}]_{i}$, and $\Vert\mathbf{a}\Vert_2$ refer to the transpose, conjugate transpose, $i$th element, and $l_{2}$-norm of a vector $\mathbf{a}$, respectively.
$\mathbf{A}^T$, $\mathbf{A}^H$, $[\mathbf{A}]_{i,k}$, $[\mathbf{A}]_{:,k}$, and $\Vert\mathbf{A}\Vert_F$ refer to the transpose, conjugate transpose, $(i,k)$th element, $k$th column, and Frobenius norm of a matrix $\mathbf{A}$, respectively.
$\mathbb{R}$ and $\mathbb{C}$ denote the real and complex number sets, respectively.
$j=\sqrt{-1}$ denotes the imaginary unit.
$\mathbf{I}_N$ and $\mathbf{0}_N$ denote the identity matrix and the all-zero matrix with dimensions $N\times N$, respectively, and $\mathbf{0}_{M\times N}$ denote the all-zero matrix with dimensions $M\times N$.

\begin{figure}[t]
\centering
\includegraphics[width=0.42\textwidth]{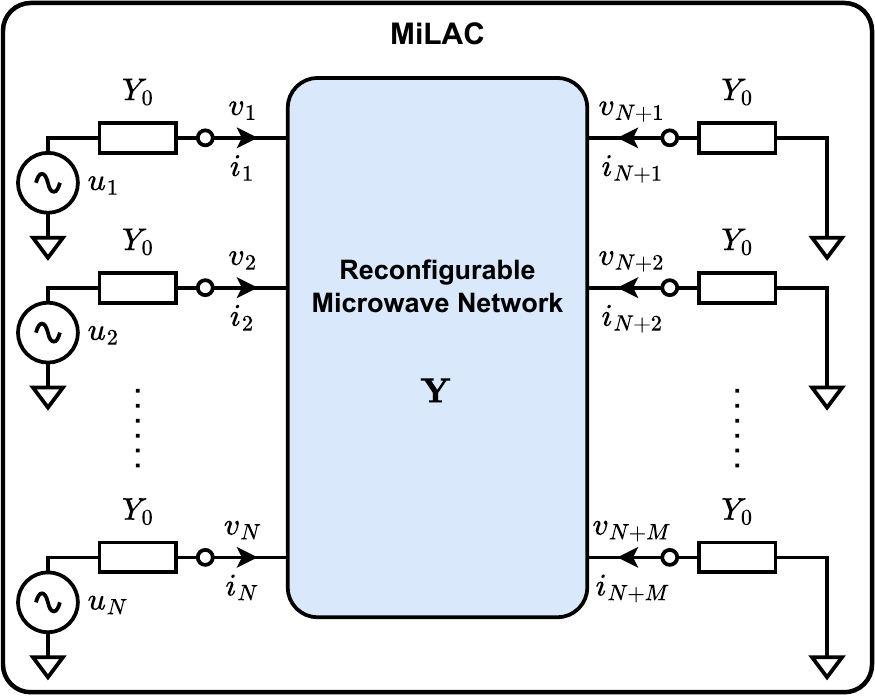}
\caption{Representation of a $P$-port MiLAC with input on $N$ ports, where $P=N+M$.}
\label{fig:ac}
\end{figure}

\section{Model of a Microwave Linear Analog Computer}
\label{sec:model}

In this section, we introduce and model a class of analog computers, that we denote as \gls{milac}.
We define a microwave analog computer as an analog computer that processes microwave \gls{em} signals, i.e., with frequencies from 300~MHz to 300~GHz \cite{poz11}.
In addition, we introduce a linear analog computer as an analog computer that satisfies the superposition principle, i.e., the output caused by multiple inputs is the sum of the outputs caused by each input individually.
Thus, a \gls{milac} is defined as an analog computer that is both microwave and linear.
In the following discussion, it is important to specify the microwave regime as all results are derived by applying microwave theory \cite{poz11}.

A \gls{milac} can be modeled in general as a linear \textit{microwave network} having multiple terminals, or ports, where \textit{input signals} are applied and \textit{output signals} are read.
To enable the \gls{milac} to compute different functions, its microwave network must be reconfigurable, which can be achieved through the use of tunable impedance components, or, equivalently, \textit{tunable admittance components}.
In the following, we model a \gls{milac}, detailing its four building blocks: microwave network, input signals, output signals, and tunable admittance components.

\subsection{Microwave Network}

The circuit implementing a \gls{milac} is in general a linear microwave network, following the definition of \gls{milac}.
According to microwave network theory \cite[Chapter 4]{poz11}, a linear microwave network having a certain number of ports $P$ can be characterized by its impedance matrix $\mathbf{Z}\in\mathbb{C}^{P\times P}$, which linearly relates the voltages and currents at the $P$ ports through
\begin{equation}
\mathbf{v}=\mathbf{Z}\mathbf{i},\label{eq:Z}
\end{equation}
where $\mathbf{v}=[v_1,\ldots,v_P]^T\in\mathbb{C}^{P\times1}$ and $\mathbf{i}=[i_1,\ldots,i_P]^T\in\mathbb{C}^{P\times1}$ are the voltage and current vectors at the ports, with $v_p\in\mathbb{C}$ and $i_p\in\mathbb{C}$ being the voltage and current at the $p$th port, for $p=1,\ldots,P$.
\footnote{Throughout the paper, voltage and current signals are narrowband basspand signals represented through their complex-valued baseband equivalent.
For example, for the voltage at the $p$th port, $v_p\in\mathbb{C}$ is the baseband equivalent, while the actual passband signal is $\Re\{v_pe^{j2\pi f_ct}\}$, where $f_c$ is the carrier frequency.
In other words, the complex number $v_p=Ae^{j\varphi}$ represents the signal $A\cos(2\pi f_ct+\varphi)$, or equivalently, the complex number $v_p=I+jQ$ represents the signal $I\cos(2\pi f_ct)-Q\sin(2\pi f_ct)$.}
Alternatively, a linear microwave network with $P$ ports can also be represented by its admittance matrix $\mathbf{Y}\in\mathbb{C}^{P\times P}$, given by $\mathbf{Y}=\mathbf{Z}^{-1}$, relating the currents and voltages at the $P$ ports as
\begin{equation}
\mathbf{i}=\mathbf{Y}\mathbf{v},\label{eq:Y}
\end{equation}
equivalently to \eqref{eq:Z} \cite[Chapter 4]{poz11}.
Impedance and admittance matrices are equivalent representations of a linear microwave network, allowing its analysis regardless of its internal circuit topology.

\subsection{Input Signals}

Assume that the \gls{milac} receives the input on the first $N$ ports of its reconfigurable microwave network, with $N\leq P$.
This comes with no loss of generality, as the ports can always be reordered to ensure the input is applied on the first $N$ ports.
In addition, if an input needs to be applied on an internal node of the microwave network, an additional port can be opened, making this model general.
The input is applied by $N$ voltage sources with their series impedance $Z_0$, e.g., $Z_0=50~\Omega$, or, equivalently, series admittance $Y_0=Z_0^{-1}$, as shown in Fig.~\ref{fig:ac}.
Denoting as $u_n\in\mathbb{C}$ the voltage of the $n$th voltage source, for $n=1,\ldots,N$, the input vector is given by $\mathbf{u}=[u_1,\ldots,u_N]^T\in\mathbb{C}^{N\times1}$.
Following Ohm's law, the input voltage $u_n$ is related to the voltage $v_n$ and current $i_n$ at the $n$th port by $i_n=Y_0(u_n-v_n)$, for $n=1,\ldots,N$, giving
\begin{equation}
\mathbf{i}_1=Y_0\left(\mathbf{u}-\mathbf{v}_1\right),\label{eq:input}
\end{equation}
where we introduced $\mathbf{v}_1=[v_1,\ldots,v_N]^T\in\mathbb{C}^{N\times1}$ and $\mathbf{i}_1=[i_1,\ldots,i_N]^T\in\mathbb{C}^{N\times1}$ as the vectors of the voltages and currents at the first $N$ ports of the network, respectively.

\subsection{Output Signals}

The output of the \gls{milac} is read on all the $P$ ports of its reconfigurable microwave network, i.e., is given by the vector $\mathbf{v}=[v_1,\ldots,v_P]^T$.
If an output needs to be read on an internal node of the microwave network, an additional port can be added, confirming the generality of our model.
Since the number of input ports $N$ is never higher than the number of total ports $P$, we can write $P=N+M$, where $M$ is the number of ports with no input signal.
To read the output voltage on these $M$ ports without causing undesired reflections within the microwave network, these $M$ ports must be perfectly matched to $Z_0$, i.e., they are terminated to ground via an impedance $Z_0$, or, equivalently, an admittance $Y_0=Z_0^{-1}$, as shown in Fig.~\ref{fig:ac}.
Thus, according to Ohm's law, the voltage and current at the $(N+m)$th port are related by $i_{N+m}=-Y_0v_{N+m}$, for $m=1,\ldots,M$, where the negative sign arises from the direction of the current $i_{N+m}$.
This yields
\begin{equation}
\mathbf{i}_2=-Y_0\mathbf{v}_2,\label{eq:output}
\end{equation}
where we introduced $\mathbf{v}_2=[v_{N+1},\ldots,v_{N+M}]^T\in\mathbb{C}^{M\times1}$ and $\mathbf{i}_2=[i_{N+1},\ldots,i_{N+M}]^T\in\mathbb{C}^{M\times1}$ as the vectors of voltages and currents at the last $M$ ports of the network, respectively.

\subsection{Tunable Admittance Components}
\label{sec:tunable}

The microwave network of the \gls{milac} must be implemented with tunable admittance components, such that its impedance matrix $\mathbf{Z}$ and admittance matrix $\mathbf{Y}$ are reconfigurable, allowing the computation of different operations in the analog domain.
As impedance and admittance matrices are equivalent representations of a microwave network, we use the admittance matrix $\mathbf{Y}$ representation in this study since it can be directly related to the tunable components of the microwave network, as will become clear in the remainder of this section.

\begin{figure}[t]
\centering
\includegraphics[width=0.42\textwidth]{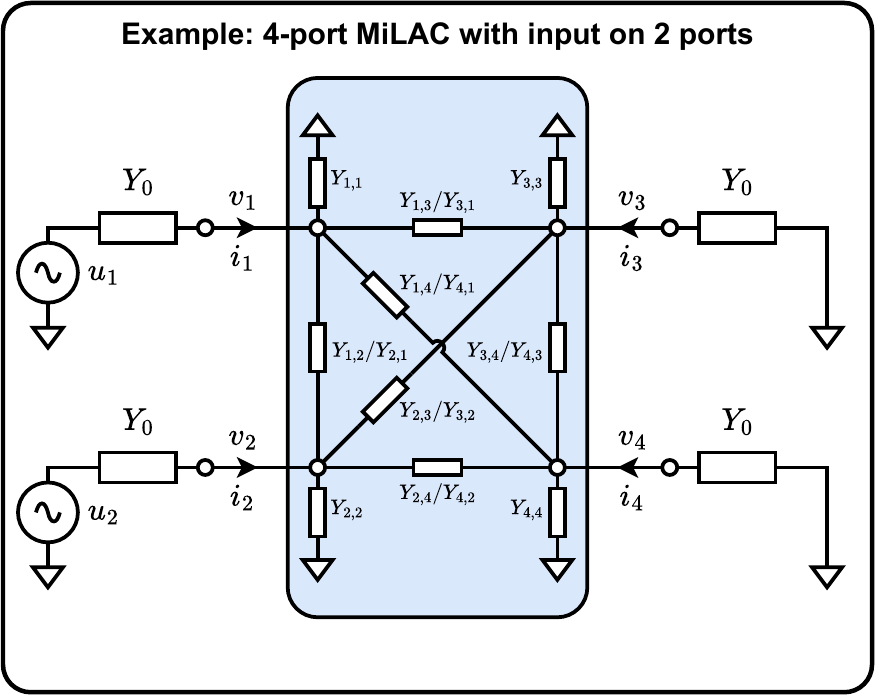}
\caption{Representation of a $4$-port MiLAC with input on $2$ ports.}
\label{fig:ac-4-port}
\end{figure}

To realize a \gls{milac} with maximum flexibility, i.e., whose admittance matrix $\mathbf{Y}$ can be reconfigured to any complex matrix with dimension $P\times P$, we can implement its microwave network with $P^2$ tunable admittance components connecting each port to ground and to all other ports.
We characterize these components through the lumped-element model.
Specifically, port $k$ is connected to ground through an admittance $Y_{k,k}\in\mathbb{C}$, for $k=1,\ldots,P$, and port $k$ is connected to port $i$ through an admittance $Y_{i,k}\in\mathbb{C}$, $\forall i\neq k$.
An example of such a \gls{milac} with maximum flexibility is reported in Fig.~\ref{fig:ac-4-port}, where we have $P=4$ ports and $N=2$ input signals.

\begin{figure}[t]
\centering
\includegraphics[width=0.48\textwidth]{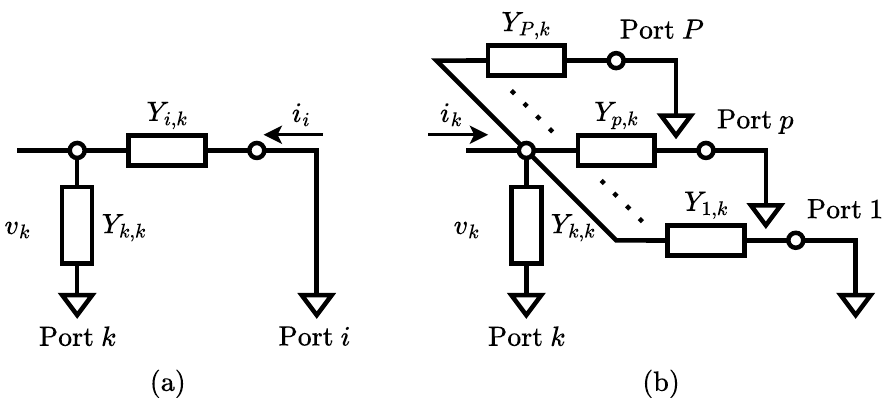}
\caption{Circuit to be analyzed to compute (a) the off-diagonal entry $[\mathbf{Y}]_{i,k}$ and (b) the diagonal entry $[\mathbf{Y}]_{k,k}$ as a function of the tunable admittance components.}
\label{fig:network}
\end{figure}

This topology of tunable admittance components allows us to arbitrarily reconfigure the admittance matrix $\mathbf{Y}$, as it can be shown by following the definition of $\mathbf{Y}$ in \eqref{eq:Y}.
According to \eqref{eq:Y}, $[\mathbf{Y}]_{i,k}$ is given by applying a voltage $v_{k}$ to port $k$, short-circuiting all the other ports, i.e., $v_p=0$, $\forall p\neq k$, measuring the current $i_i$ entering at port $i$, and computing the ratio
\begin{equation}
\left[\mathbf{Y}\right]_{i,k}=\left.\frac{i_i}{v_{k}}\right\vert_{v_p=0,\forall p\neq k},\label{eq:Y-def}
\end{equation}
for $i,k=1,\ldots,P$ \cite[Chapter 4]{poz11}.
Following \eqref{eq:Y-def}, the entries $\left[\mathbf{Y}\right]_{i,k}$ can be computed as a function of the tunable admittance components $\{Y_{i,k}\}_{i,k=1}^{P}$ by separately studying the two cases of off-diagonal entries $\left[\mathbf{Y}\right]_{i,k}$ with $i\neq k$ and diagonal entries $\left[\mathbf{Y}\right]_{i,k}$ with $i=k$.
First, when $i\neq k$, the ratio in \eqref{eq:Y-def} can be determined by studying the circuit in Fig.~\ref{fig:network}(a), since all the other admittance components in the microwave network do not impact the current $i_i$.
Thus, we obtain the off-diagonal entry $\left[\mathbf{Y}\right]_{i,k}$ as $\left[\mathbf{Y}\right]_{i,k}=-Y_{i,k}$ following Ohm's law, where the negative sign arises from the direction of the current $i_i$.
Second, when $i=k$, the ratio in \eqref{eq:Y-def} can be determined by studying the circuit in Fig.~\ref{fig:network}(b), since all the other admittance components do not influence $i_k$.
In this case, the diagonal entry $\left[\mathbf{Y}\right]_{k,k}$ is given by the parallel of the $P$ admittance components $Y_{1,k},\ldots,Y_{P,k}$, i.e., $\left[\mathbf{Y}\right]_{k,k}=\sum_{p=1}^PY_{p,k}$.
By summarizing the two examined cases, we can write
\begin{equation}
\left[\mathbf{Y}\right]_{i,k}=
\begin{cases}
-Y_{i,k} & i\neq k\\
\sum_{p=1}^PY_{p,k} & i=k
\end{cases},\label{eq:Yik-entry}
\end{equation}
for $i,k=1,\ldots,P$, giving the expression of each entry of the admittance matrix $\mathbf{Y}$ as a function of the tunable admittance components $\{Y_{i,k}\}_{i,k=1}^{P}$.
By inverting \eqref{eq:Yik-entry}, we can derive the expression of the tunable admittance components as a function of any given admittance matrix $\mathbf{Y}$ as
\begin{equation}
Y_{i,k}=
\begin{cases}
-\left[\mathbf{Y}\right]_{i,k} & i\neq k\\
\sum_{p=1}^P\left[\mathbf{Y}\right]_{p,k} & i=k
\end{cases},\label{eq:Yik-component-Y}
\end{equation}
for $i,k=1,\ldots,P$, showing that the $P^2$ tunable admittance components $\{Y_{i,k}\}_{i,k=1}^{P}$ can be adjusted to allow $\mathbf{Y}$ to assume any arbitrary value\footnote{Equation \eqref{eq:Yik-component-Y} clarifies the direct relationship between the entries of the admittance matrix $\mathbf{Y}$ and the tunable components of the microwave network.
Similar relationships in closed-form are not available to relate the entries of the impedance matrix $\mathbf{Z}$ and the tunable components.}.

It is also possible to implement a \gls{milac} with a reduced number of tunable admittance components.
However, such a \gls{milac} will experience reduced flexibility since the microwave network cannot be reconfigured to take any arbitrary admittance matrix $\mathbf{Y}$, as a consequence of the reduced circuit complexity.
In this work, we consider a \gls{milac} having maximum flexibility and leave the analysis of the trade-off between circuit complexity and flexibility as future research.

\section{Analysis of a Microwave Linear Analog Computer}
\label{sec:analysis}

We have introduced and modeled the concept of \gls{milac}, highlighting that it consists of a reconfigurable microwave network with multiple ports where input signals are applied and output signals are read.
In this section, we characterize the output $\mathbf{v}$ as a function of the input $\mathbf{u}$ and the admittance matrix of the microwave network $\mathbf{Y}$.
\footnote{Fully-connected microwave networks have also been studied to implement \gls{bd-ris} \cite{she20}.
However, the analysis of this section differs from previous literature since a \gls{milac} receives (and returns) input (and output) signals directly at its ports, while the signals at the ports of a \gls{bd-ris} are typically not considered in a wireless system.}
For better clarity, we distinguish between the cases where the input is applied only to some ports ($N<P$) and to all ports ($N=P$).

\subsection{Analysis of a MiLAC with Input on Some Ports}
\label{sec:analysis1}

In the case the input is applied only to some ports of the reconfigurable microwave network, i.e., $N<P$ and $M>0$, the \gls{milac} is described by the system of linear equations \eqref{eq:Y}, \eqref{eq:input}, and \eqref{eq:output}, given by
\begin{equation}
\begin{cases}
\mathbf{i}&=\mathbf{Y}\mathbf{v}\\
\mathbf{i}_1&=Y_0\left(\mathbf{u}-\mathbf{v}_1\right)\\
\mathbf{i}_2&=-Y_0\mathbf{v}_2
\end{cases},\label{eq:system1}
\end{equation}
which characterizes the \gls{milac} in Fig.~\ref{fig:ac}.
To solve \eqref{eq:system1}, we rewrite it in a more compact form as
\begin{equation}
\begin{cases}
\mathbf{i}&=\mathbf{Y}\mathbf{v}\\
\mathbf{i}&=Y_0\left(\tilde{\mathbf{u}}-\mathbf{v}\right)
\end{cases},\label{eq:system2}
\end{equation}
where we exploited the fact that $\mathbf{v}=[\mathbf{v}_1^T,\mathbf{v}_2^T]^T$ and $\mathbf{i}=[\mathbf{i}_1^T,\mathbf{i}_2^T]^T$, and introduced the vector $\tilde{\mathbf{u}}\in\mathbb{C}^{P\times1}$ as $\tilde{\mathbf{u}}=[\mathbf{u}^T,\mathbf{0}_{M\times1}^T]^T$.
By noticing that \eqref{eq:system2} yields
\begin{equation}
\mathbf{Y}\mathbf{v}=Y_0\left(\tilde{\mathbf{u}}-\mathbf{v}\right),
\end{equation}
we can readily express the output $\mathbf{v}$ as
\begin{equation}
\mathbf{v}=\mathbf{P}^{-1}\tilde{\mathbf{u}},\label{eq:v-tmp}
\end{equation}
where the matrix $\mathbf{P}\in\mathbb{C}^{P\times P}$ is defined as
\begin{equation}
\mathbf{P}=\frac{\mathbf{Y}}{Y_0}+\mathbf{I}_{P},\label{eq:P}
\end{equation}
giving the output $\mathbf{v}$ as a function of the admittance matrix $\mathbf{Y}$ and the input $\mathbf{u}$.
By denoting the inverse of $\mathbf{P}$ as
\begin{equation}
\mathbf{Q}=\mathbf{P}^{-1},
\end{equation}
partitioned as
\begin{equation}
\mathbf{Q}=
\begin{bmatrix}
\mathbf{Q}_{11} & \mathbf{Q}_{12}\\
\mathbf{Q}_{21} & \mathbf{Q}_{22}
\end{bmatrix},
\end{equation}
with $\mathbf{Q}_{11}\in\mathbb{C}^{N\times N}$, $\mathbf{Q}_{12}\in\mathbb{C}^{N\times M}$, $\mathbf{Q}_{21}\in\mathbb{C}^{M\times N}$, $\mathbf{Q}_{22}\in\mathbb{C}^{M\times M}$, the output vector $\mathbf{v}=[\mathbf{v}_1^T,\mathbf{v}_2^T]^T$ given by \eqref{eq:v-tmp} can be further expressed as
\begin{equation}
\begin{bmatrix}
\mathbf{v}_1\\
\mathbf{v}_2
\end{bmatrix}
=
\begin{bmatrix}
\mathbf{Q}_{11}\\
\mathbf{Q}_{21}
\end{bmatrix}
\mathbf{u},\label{eq:v}
\end{equation}
solely depending on the blocks $\mathbf{Q}_{11}$ and $\mathbf{Q}_{21}$ and the input $\mathbf{u}$.
To explicitly express $\mathbf{Q}_{11}$ and $\mathbf{Q}_{21}$ in \eqref{eq:v} as a function of the admittance matrix $\mathbf{Y}$, we exploit the following proposition.

\begin{proposition}
Let $\mathbf{P}\in\mathbb{C}^{P\times P}$ be an invertible matrix, and $\mathbf{P}^{-1}\in\mathbb{C}^{P\times P}$ its inverse, partitioned as
\begin{equation}
\mathbf{P}=
\begin{bmatrix}
\mathbf{A} & \mathbf{B}\\
\mathbf{C} & \mathbf{D}
\end{bmatrix},\;
\mathbf{P}^{-1}=
\begin{bmatrix}
\mathbf{A}^\prime & \mathbf{B}^\prime\\
\mathbf{C}^\prime & \mathbf{D}^\prime
\end{bmatrix},
\end{equation}
where $\mathbf{A},\mathbf{A}^\prime\in\mathbb{C}^{N\times N}$, $\mathbf{B},\mathbf{B}^\prime\in\mathbb{C}^{N\times M}$, $\mathbf{C},\mathbf{C}^\prime\in\mathbb{C}^{M\times N}$, $\mathbf{D},\mathbf{D}^\prime\in\mathbb{C}^{M\times M}$, and $P=N+M$.
Then, we have the following results:
\begin{enumerate}
\item If $\mathbf{A}$ is invertible, $\mathbf{P}$ invertible implies that $\mathbf{C}\mathbf{A}^{-1}\mathbf{B}-\mathbf{D}$ is invertible and
\begin{equation}
\mathbf{A}^\prime=\mathbf{A}^{-1}-\mathbf{A}^{-1}\mathbf{B}\left(\mathbf{C}\mathbf{A}^{-1}\mathbf{B}-\mathbf{D}\right)^{-1}\mathbf{C}\mathbf{A}^{-1},\label{eq:inv1-1}
\end{equation}
\begin{equation}
\mathbf{B}^\prime=\mathbf{A}^{-1}\mathbf{B}\left(\mathbf{C}\mathbf{A}^{-1}\mathbf{B}-\mathbf{D}\right)^{-1},
\end{equation}
\begin{equation}
\mathbf{C}^\prime=\left(\mathbf{C}\mathbf{A}^{-1}\mathbf{B}-\mathbf{D}\right)^{-1}\mathbf{C}\mathbf{A}^{-1},
\end{equation}
\begin{equation}
\mathbf{D}^\prime=-\left(\mathbf{C}\mathbf{A}^{-1}\mathbf{B}-\mathbf{D}\right)^{-1}.\label{eq:inv1-4}
\end{equation}
\item If $\mathbf{D}$ is invertible, $\mathbf{P}$ invertible implies that $\mathbf{B}\mathbf{D}^{-1}\mathbf{C}-\mathbf{A}$ is invertible and
\begin{equation}
\mathbf{A}^\prime=-\left(\mathbf{B}\mathbf{D}^{-1}\mathbf{C}-\mathbf{A}\right)^{-1},\label{eq:inv2-1}
\end{equation}
\begin{equation}
\mathbf{B}^\prime=\left(\mathbf{B}\mathbf{D}^{-1}\mathbf{C}-\mathbf{A}\right)^{-1}\mathbf{B}\mathbf{D}^{-1},
\end{equation}
\begin{equation}
\mathbf{C}^\prime=\mathbf{D}^{-1}\mathbf{C}\left(\mathbf{B}\mathbf{D}^{-1}\mathbf{C}-\mathbf{A}\right)^{-1},
\end{equation}
\begin{equation}
\mathbf{D}^\prime=\mathbf{D}^{-1}-\mathbf{D}^{-1}\mathbf{C}\left(\mathbf{B}\mathbf{D}^{-1}\mathbf{C}-\mathbf{A}\right)^{-1}\mathbf{B}\mathbf{D}^{-1}.\label{eq:inv2-4}
\end{equation}
\item If $\mathbf{A}$ and $\mathbf{D}$ are both invertible, \eqref{eq:inv1-1}-\eqref{eq:inv1-4} and \eqref{eq:inv2-1}-\eqref{eq:inv2-4} are equivalent.
\end{enumerate}
\label{pro:2x2}
\end{proposition}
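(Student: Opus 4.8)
The plan is to establish parts 1 and 2 via block Gaussian elimination (a block LDU factorization of $\mathbf{P}$), and part 3 by invoking uniqueness of the matrix inverse. For part 1, assume $\mathbf{A}$ is invertible. I would first verify by direct block multiplication the factorization
\[
\mathbf{P}=
\begin{bmatrix}\mathbf{I}_N & \mathbf{0}_{N\times M}\\ \mathbf{C}\mathbf{A}^{-1} & \mathbf{I}_M\end{bmatrix}
\begin{bmatrix}\mathbf{A} & \mathbf{0}_{N\times M}\\ \mathbf{0}_{M\times N} & \mathbf{D}-\mathbf{C}\mathbf{A}^{-1}\mathbf{B}\end{bmatrix}
\begin{bmatrix}\mathbf{I}_N & \mathbf{A}^{-1}\mathbf{B}\\ \mathbf{0}_{M\times N} & \mathbf{I}_M\end{bmatrix}.
\]
The two outer factors are unit block-triangular, hence invertible regardless of their off-diagonal block, so $\mathbf{P}$ is invertible if and only if the middle block-diagonal factor is, i.e.\ if and only if both $\mathbf{A}$ and the Schur complement $\mathbf{D}-\mathbf{C}\mathbf{A}^{-1}\mathbf{B}$ are invertible. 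Since $\mathbf{A}$ is assumed invertible and $\mathbf{P}$ is invertible, this forces $\mathbf{D}-\mathbf{C}\mathbf{A}^{-1}\mathbf{B}$, and therefore $\mathbf{C}\mathbf{A}^{-1}\mathbf{B}-\mathbf{D}=-(\mathbf{D}-\mathbf{C}\mathbf{A}^{-1}\mathbf{B})$, to be invertible. (Equivalently, taking determinants gives $\det\mathbf{P}=\det\mathbf{A}\cdot\det(\mathbf{D}-\mathbf{C}\mathbf{A}^{-1}\mathbf{B})$, yielding the same conclusion.)

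Next I would invert the factorization: the inverse of a product is the product of the inverses in reverse order, and the inverse of a unit block-triangular matrix is obtained simply by negating its off-diagonal block, so
\[
\mathbf{P}^{-1}=
\begin{bmatrix}\mathbf{I}_N & -\mathbf{A}^{-1}\mathbf{B}\\ \mathbf{0}_{M\times N} & \mathbf{I}_M\end{bmatrix}
\begin{bmatrix}\mathbf{A}^{-1} & \mathbf{0}_{N\times M}\\ \mathbf{0}_{M\times N} & (\mathbf{D}-\mathbf{C}\mathbf{A}^{-1}\mathbf{B})^{-1}\end{bmatrix}
\begin{bmatrix}\mathbf{I}_N & \mathbf{0}_{N\times M}\\ -\mathbf{C}\mathbf{A}^{-1} & \mathbf{I}_M\end{bmatrix}.
\]
Carrying out these two block products and substituting $(\mathbf{D}-\mathbf{C}\mathbf{A}^{-1}\mathbf{B})^{-1}=-(\mathbf{C}\mathbf{A}^{-1}\mathbf{B}-\mathbf{D})^{-1}$, then reading off $\mathbf{A}^\prime$ as the $(1,1)$ block, $\mathbf{B}^\prime$ as the $(1,2)$ block, and so on, reproduces exactly \eqref{eq:inv1-1}--\eqref{eq:inv1-4}.

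Part 2 is the mirror image. When $\mathbf{D}$ is invertible I would instead eliminate using $\mathbf{D}$, via the factorization
\[
\mathbf{P}=
\begin{bmatrix}\mathbf{I}_N & \mathbf{B}\mathbf{D}^{-1}\\ \mathbf{0}_{M\times N} & \mathbf{I}_M\end{bmatrix}
\begin{bmatrix}\mathbf{A}-\mathbf{B}\mathbf{D}^{-1}\mathbf{C} & \mathbf{0}_{N\times M}\\ \mathbf{0}_{M\times N} & \mathbf{D}\end{bmatrix}
\begin{bmatrix}\mathbf{I}_N & \mathbf{0}_{N\times M}\\ \mathbf{D}^{-1}\mathbf{C} & \mathbf{I}_M\end{bmatrix},
\]
conclude as before that $\mathbf{A}-\mathbf{B}\mathbf{D}^{-1}\mathbf{C}$, hence $\mathbf{B}\mathbf{D}^{-1}\mathbf{C}-\mathbf{A}$, is invertible, invert the factorization, and read off \eqref{eq:inv2-1}--\eqref{eq:inv2-4}. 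Finally, part 3 is immediate: when both $\mathbf{A}$ and $\mathbf{D}$ are invertible, \eqref{eq:inv1-1}--\eqref{eq:inv1-4} and \eqref{eq:inv2-1}--\eqref{eq:inv2-4} are both valid expressions for the blocks of $\mathbf{P}^{-1}$, and since the inverse of a matrix is unique, the two sets of expressions must coincide. (Should a direct algebraic verification be preferred, the equality of the two formulas for each block is precisely the Woodbury/Sherman--Morrison identity.)

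There is no deep obstacle here; this is standard block-matrix algebra. The only points requiring care are bookkeeping ones: writing the triangular factors in the correct order, using that inverting a unit block-triangular matrix just negates its off-diagonal block, and consistently tracking the sign flip between the Schur complement $\mathbf{D}-\mathbf{C}\mathbf{A}^{-1}\mathbf{B}$ (resp.\ $\mathbf{A}-\mathbf{B}\mathbf{D}^{-1}\mathbf{C}$) that arises naturally in the factorization and the quantity $\mathbf{C}\mathbf{A}^{-1}\mathbf{B}-\mathbf{D}$ (resp.\ $\mathbf{B}\mathbf{D}^{-1}\mathbf{C}-\mathbf{A}$) that appears in the statement.
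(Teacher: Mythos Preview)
Your proof is correct and complete; the block LDU (Schur complement) factorization you use is the standard route to these formulas. The paper itself does not give a proof at all but simply cites \cite[Theorem~2.1]{lu02}, so your argument is in fact more self-contained than what appears in the paper.
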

\begin{proof}
The proposition follows from \cite[Theorem~2.1]{lu02}.
\end{proof}

To rewrite \eqref{eq:v} as a function of $\mathbf{Y}$ by using Proposition~\ref{pro:2x2}, we partition $\mathbf{P}$ and $\mathbf{Y}$ as
\begin{equation}
\mathbf{P}=
\begin{bmatrix}
\mathbf{P}_{11} & \mathbf{P}_{12}\\
\mathbf{P}_{21} & \mathbf{P}_{22}
\end{bmatrix},\;
\mathbf{Y}=
\begin{bmatrix}
\mathbf{Y}_{11} & \mathbf{Y}_{12}\\
\mathbf{Y}_{21} & \mathbf{Y}_{22}
\end{bmatrix},
\end{equation}
where $\mathbf{P}_{11},\mathbf{Y}_{11}\in\mathbb{C}^{N\times N}$, $\mathbf{P}_{12},\mathbf{Y}_{12}\in\mathbb{C}^{N\times M}$, $\mathbf{P}_{21},\mathbf{Y}_{21}\in\mathbb{C}^{M\times N}$, and $\mathbf{P}_{22},\mathbf{Y}_{22}\in\mathbb{C}^{M\times M}$, giving
\begin{equation}
\begin{bmatrix}
\mathbf{P}_{11} & \mathbf{P}_{12}\\
\mathbf{P}_{21} & \mathbf{P}_{22}
\end{bmatrix}
=
\begin{bmatrix}
\mathbf{Y}_{11}/Y_0+\mathbf{I}_N & \mathbf{Y}_{12}/Y_0\\
\mathbf{Y}_{21}/Y_0 & \mathbf{Y}_{22}/Y_0+\mathbf{I}_M
\end{bmatrix},\label{eq:Pblocks}
\end{equation}
as a consequence of \eqref{eq:P}.
Thus, assuming $\mathbf{P}$ to be invertible, with $\mathbf{P}_{11}$ or $\mathbf{P}_{22}$ invertible, the following results hold as a consequence of \eqref{eq:v}:
\begin{enumerate}
\item If $\mathbf{P}_{11}$ is invertible, we have
\begin{multline}
\mathbf{v}_1=\Big(\mathbf{P}_{11}^{-1}-\mathbf{P}_{11}^{-1}\mathbf{P}_{12}\Big.\\
\left.\times\left(\mathbf{P}_{21}\mathbf{P}_{11}^{-1}\mathbf{P}_{12}-\mathbf{P}_{22}\right)^{-1}\mathbf{P}_{21}\mathbf{P}_{11}^{-1}\right)\mathbf{u},\label{eq:v1-1}
\end{multline}
\begin{equation}
\mathbf{v}_2=\left(\left(\mathbf{P}_{21}\mathbf{P}_{11}^{-1}\mathbf{P}_{12}-\mathbf{P}_{22}\right)^{-1}\mathbf{P}_{21}\mathbf{P}_{11}^{-1}\right)\mathbf{u}.\label{eq:v2-1}
\end{equation}
\item If $\mathbf{P}_{22}$ is invertible, we have
\begin{equation}
\mathbf{v}_1=-\left(\mathbf{P}_{12}\mathbf{P}_{22}^{-1}\mathbf{P}_{21}-\mathbf{P}_{11}\right)^{-1}\mathbf{u},\label{eq:v1-2}
\end{equation}
\begin{equation}
\mathbf{v}_2=\left(\mathbf{P}_{22}^{-1}\mathbf{P}_{21}\left(\mathbf{P}_{12}\mathbf{P}_{22}^{-1}\mathbf{P}_{21}-\mathbf{P}_{11}\right)^{-1}\right)\mathbf{u}.\label{eq:v2-2}
\end{equation}
\item If $\mathbf{P}_{11}$ and $\mathbf{P}_{22}$ are both invertible, \eqref{eq:v1-1}-\eqref{eq:v2-1} and \eqref{eq:v1-2}-\eqref{eq:v2-2} are equivalent.
\end{enumerate}
Remarkably, \eqref{eq:v1-1}-\eqref{eq:v2-1} and \eqref{eq:v1-2}-\eqref{eq:v2-2} contain the expressions of the output vectors $\mathbf{v}_1$ and $\mathbf{v}_2$ as a function of the input $\mathbf{u}$ and the blocks of $\mathbf{P}$, which are directly related to the blocks of $\mathbf{Y}$ by \eqref{eq:Pblocks}, when $M>0$.
The expressions of $\mathbf{v}_1$ and $\mathbf{v}_2$ show that a \gls{milac} can perform non-linear computations, such as matrix inversions, while being implemented with a linear microwave network.
This is possible because the input data, i.e., $\mathbf{P}$ and $\mathbf{u}$, is partially encoded into the \gls{milac} structure (specifically, the blocks of $\mathbf{P}$ are encoded into the tunable admittance components), and not only into the input voltage $\mathbf{u}$.
The non-linear mapping between the structure of a linear microwave network and its transfer function has recently received significant attention under the term ``structural non-linearity'' to implement non-linear physical \glspl{nn} based on linear wave systems \cite{mom23,xi24,yi24}.

\begin{figure}[t]
\centering
\includegraphics[width=0.34\textwidth]{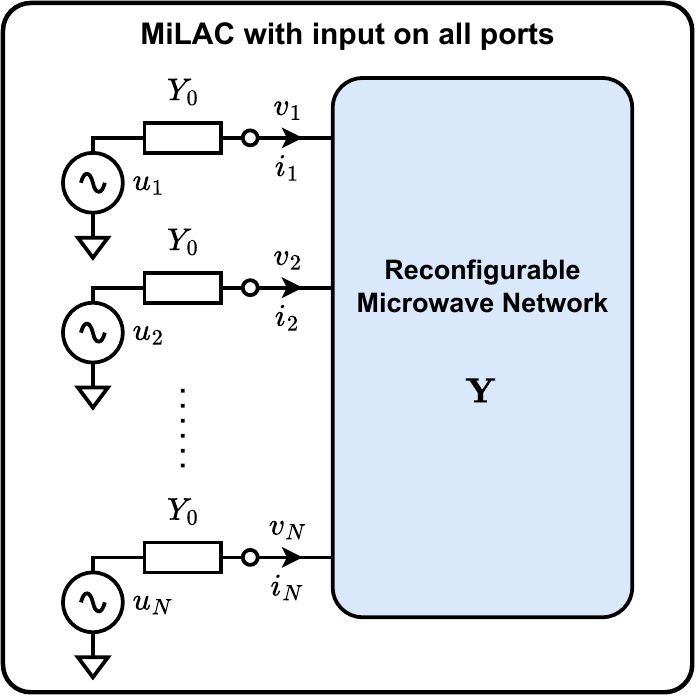}
\caption{Representation of an $N$-port MiLAC with input on all ports.}
\label{fig:ac-N-port}
\end{figure}

Assuming that the tunable components $\{Y_{i,k}\}_{i,k=1}^{P}$ are allowed to take any arbitrary complex value, a \gls{milac} can compute \eqref{eq:v1-1}-\eqref{eq:v2-1} and \eqref{eq:v1-2}-\eqref{eq:v2-2} given any blocks $\mathbf{P}_{11}$, $\mathbf{P}_{12}$, $\mathbf{P}_{21}$, and $\mathbf{P}_{22}$.
This can be achieved by first computing $\mathbf{Y}$ from any arbitrary $\mathbf{P}$ as
\begin{equation}
\mathbf{Y}=Y_0\mathbf{P}-Y_0\mathbf{I}_{P},\label{eq:P-inv}
\end{equation}
which is obtained by inverting \eqref{eq:P}, and then by setting the tunable components $\{Y_{i,k}\}_{i,k=1}^{P}$ as a function of $\mathbf{Y}$ as in \eqref{eq:Yik-component-Y}.
By substituting \eqref{eq:P-inv} into \eqref{eq:Yik-component-Y}, we obtain that the tunable admittance components can be reconfigured as a function of any arbitrary $\mathbf{P}$ as
\begin{equation}
Y_{i,k}=
\begin{cases}
-Y_0\left[\mathbf{P}\right]_{i,k} & i\neq k\\
Y_0\sum_{p=1}^P\left[\mathbf{P}\right]_{p,k}-Y_0 & i=k
\end{cases},\label{eq:Yik-component-P}
\end{equation}
for $i,k=1,\ldots,P$, allowing the \gls{milac} to compute \eqref{eq:v1-1}-\eqref{eq:v2-1} and \eqref{eq:v1-2}-\eqref{eq:v2-2} for any given $\mathbf{P}_{11}$, $\mathbf{P}_{12}$, $\mathbf{P}_{21}$, and $\mathbf{P}_{22}$.

\begin{remark}
The expressions in \eqref{eq:v1-1}-\eqref{eq:v2-1} and \eqref{eq:v1-2}-\eqref{eq:v2-2} would require computationally expensive matrix-matrix products and matrix inversion operations if computed through a digital computer.
However, they can be quickly computed by a \gls{milac} as the \gls{em} signal propagates within the microwave network at a high fraction of the light speed \cite{mcm23}.
In other words, after properly setting the input $\mathbf{u}$ and the tunable components $\{Y_{i,k}\}_{i,k=1}^{P}$, \eqref{eq:v1-1}-\eqref{eq:v2-1} and \eqref{eq:v1-2}-\eqref{eq:v2-2} can be computed with complexity $\mathcal{O}(1)$, independently from the input size $N$ and output size $P=N+M$.
Remarkably, the expressions in \eqref{eq:v1-1}-\eqref{eq:v2-1} and \eqref{eq:v1-2}-\eqref{eq:v2-2} are particularly relevant for computing the \gls{lmmse} estimator and the covariance matrix of its error, as it will be clarified in Section~\ref{sec:lmmse-inv}.
\end{remark}

\begin{table*}[t]
\centering
\caption{Similarity between MiLAC output and LMMSE estimator.}
\begin{tabular}{@{}ccc@{}}
\toprule
MiLAC output & LMMSE estimator and its error covariance matrix &
$\begin{bmatrix}
\mathbf{P}_{11} & \mathbf{P}_{12}\\
\mathbf{P}_{21} & \mathbf{P}_{22}
\end{bmatrix}$\\
\midrule
$\mathbf{v}_2=\left(\left(\mathbf{P}_{21}\mathbf{P}_{11}^{-1}\mathbf{P}_{12}-\mathbf{P}_{22}\right)^{-1}\mathbf{P}_{21}\mathbf{P}_{11}^{-1}\right)\mathbf{u}$
&
$\hat{\mathbf{x}}_{\textnormal{LMMSE},1}
=\left(\left(\mathbf{H}^H\mathbf{C}_{\mathbf{n}}^{-1}\mathbf{H}+\mathbf{C}_{\mathbf{x}}^{-1}\right)^{-1}\mathbf{H}^H\mathbf{C}_{\mathbf{n}}^{-1}\right)\mathbf{y}$
&
$\begin{bmatrix}
\pm\mathbf{C}_\mathbf{n} & \mathbf{H}\\
\mathbf{H}^H & \mp\mathbf{C}_\mathbf{x}^{-1}
\end{bmatrix}$
\\
\midrule
$\mathbf{v}_2=\left(\mathbf{P}_{22}^{-1}\mathbf{P}_{21}\left(\mathbf{P}_{12}\mathbf{P}_{22}^{-1}\mathbf{P}_{21}-\mathbf{P}_{11}\right)^{-1}\right)\mathbf{u}$
&
$\hat{\mathbf{x}}_{\textnormal{LMMSE},2}
=\left(\mathbf{C}_{\mathbf{x}}\mathbf{H}^H\left(\mathbf{H}\mathbf{C}_{\mathbf{x}}\mathbf{H}^H+\mathbf{C}_{\mathbf{n}}\right)^{-1}\right)\mathbf{y}$
&
$\begin{bmatrix}
\pm\mathbf{C}_\mathbf{n} & \mathbf{H}\\
\mathbf{H}^H & \mp\mathbf{C}_\mathbf{x}^{-1}
\end{bmatrix}$
\\
\midrule
$\mathbf{v}_1=\left(\mathbf{P}_{11}^{-1}-\mathbf{P}_{11}^{-1}\mathbf{P}_{12}
\left(\mathbf{P}_{21}\mathbf{P}_{11}^{-1}\mathbf{P}_{12}-\mathbf{P}_{22}\right)^{-1}\mathbf{P}_{21}\mathbf{P}_{11}^{-1}\right)\mathbf{u}$
&
$\mathbf{C}_{\mathbf{e},1}
=\mathbf{C}_{\mathbf{x}}-\mathbf{C}_{\mathbf{x}}\mathbf{H}^H\left(\mathbf{H}\mathbf{C}_{\mathbf{x}}\mathbf{H}^H+\mathbf{C}_{\mathbf{n}}\right)^{-1}\mathbf{H}\mathbf{C}_{\mathbf{x}}$
&
$\begin{bmatrix}
\mathbf{C}_\mathbf{x}^{-1} & \mathbf{H}^H\\
\mathbf{H} & -\mathbf{C}_\mathbf{n}
\end{bmatrix}$
\\
\midrule
$\mathbf{v}_1=-\left(\mathbf{P}_{12}\mathbf{P}_{22}^{-1}\mathbf{P}_{21}-\mathbf{P}_{11}\right)^{-1}\mathbf{u}$
&
$\mathbf{C}_{\mathbf{e},2}
=\left(\mathbf{H}^H\mathbf{C}_{\mathbf{n}}^{-1}\mathbf{H}+\mathbf{C}_{\mathbf{x}}^{-1}\right)^{-1}$
&
$\begin{bmatrix}
\mathbf{C}_\mathbf{x}^{-1} & \mathbf{H}^H\\
\mathbf{H} & -\mathbf{C}_\mathbf{n}
\end{bmatrix}$
\\
\bottomrule
\end{tabular}
\label{tab:MiLAC-LMMSE}
\end{table*}

\subsection{Analysis of a MiLAC with Input on All Ports}
\label{sec:analysis2}

In the case the input is applied on all the ports of the reconfigurable microwave network, i.e., $N=P$ and $M=0$, we have $\mathbf{v}=\mathbf{v}_1$ and $\mathbf{i}=\mathbf{i}_1$, as represented in Fig.~\ref{fig:ac-N-port}.
Thus, this \gls{milac} is characterized by the system
\begin{equation}
\begin{cases}
\mathbf{i}_1&=\mathbf{Y}\mathbf{v}_1\\
\mathbf{i}_1&=Y_0\left(\mathbf{u}-\mathbf{v}_1\right)
\end{cases},\label{eq:system1-M0}
\end{equation}
which yields
\begin{equation}
\mathbf{Y}\mathbf{v}_1=Y_0\left(\mathbf{u}-\mathbf{v}_1\right),
\end{equation}
and allows us to express the output $\mathbf{v}_1$ as
\begin{equation}
\mathbf{v}_1=\mathbf{P}^{-1}\mathbf{u},\label{eq:v-M0}
\end{equation}
where
\begin{equation}
\mathbf{P}=\frac{\mathbf{Y}}{Y_0}+\mathbf{I}_{N},\label{eq:P-M0}
\end{equation}
giving the output vector $\mathbf{v}_1$ as a function of the input $\mathbf{u}$ and the matrix $\mathbf{Y}$, when $M=0$.
Assuming that the tunable components $\{Y_{i,k}\}_{i,k=1}^{N}$ can take any arbitrary complex value, a \gls{milac} can compute \eqref{eq:v-M0} given any matrix $\mathbf{P}$.
To this end, we first compute $\mathbf{Y}$ from any arbitrary $\mathbf{P}$ as
\begin{equation}
\mathbf{Y}=Y_0\mathbf{P}-Y_0\mathbf{I}_{N},\label{eq:P-inv-M0}
\end{equation}
which follows from \eqref{eq:P-M0}, and then we set the tunable components $\{Y_{i,k}\}_{i,k=1}^{N}$ as a function of $\mathbf{Y}$ as in \eqref{eq:Yik-component-Y}.
By substituting \eqref{eq:P-inv-M0} into \eqref{eq:Yik-component-Y}, we obtain the expression of the tunable admittance components as a function of any arbitrary $\mathbf{P}$ as
\begin{equation}
Y_{i,k}=
\begin{cases}
-Y_0\left[\mathbf{P}\right]_{i,k} & i\neq k\\
Y_0\sum_{n=1}^N\left[\mathbf{P}\right]_{n,k}-Y_0 & i=k
\end{cases},\label{eq:Yik-component-P-M0}
\end{equation}
for $i,k=1,\ldots,N$, enabling the \gls{milac} to compute \eqref{eq:v-M0} for any given $\mathbf{P}$.

\begin{remark}
Computing \eqref{eq:v-M0} with a digital computer would involve the expensive computation of the matrix inverse $\mathbf{P}^{-1}$.
However, \eqref{eq:v-M0} can be computed instantly by a \gls{milac}, regardless of the number of inputs $N$ and without requiring any matrix inversion operation.
Remarkably, this enables an efficient computation of the matrix inverse, as it will be detailed in Section~\ref{sec:lmmse-inv}.
\end{remark}


\subsection{Primitive Operations of a MiLAC}

In light of the analysis carried out in Sections~\ref{sec:analysis1} and \ref{sec:analysis2}, a \gls{milac} can be used as a computing device able to perform computing operations in the analog domain.
These operations are computed as the \gls{em} signal propagates within the microwave network at light speed and can be read as output on the network ports.
As a digital computer can compute a set of primitive operations contained in its \gls{isa}, also a \gls{milac} can compute a set of primitive operations which can be sequentially used to evaluate more complicated functions.
We identify three primitive operations that a \gls{milac} with $P$ ports can compute as a function of its input $\mathbf{u}$ applied on $N$ ports and the tunable admittance components $\{Y_{i,k}\}_{i,k=1}^{P}$.
The first primitive, denoted as
\begin{equation}
\mathbf{v}_1=\text{MiLAC}_1\left(\mathbf{u},\{Y_{i,k}\}_{i,k=1}^{P}\right),\label{eq:primitive1}
\end{equation}
returns in $\mathbf{v}_1$ the output on the first $N$ ports of the \gls{milac}.
Mathematically, this primitive is expressed as \eqref{eq:v1-1} or \eqref{eq:v1-2} depending on the invertibility of $\mathbf{P}_{11}$ and $\mathbf{P}_{22}$, where $\mathbf{P}$ is defined as in \eqref{eq:P}.
The second primitive operation, denoted as
\begin{equation}
\mathbf{v}_2=\text{MiLAC}_2\left(\mathbf{u},\{Y_{i,k}\}_{i,k=1}^{P}\right),\label{eq:primitive2}
\end{equation}
gives in $\mathbf{v}_2$ the output on the last $M$ ports of the \gls{milac}, with $M=P-N$.
This primitive is mathematically given by \eqref{eq:v2-1} or \eqref{eq:v2-2} depending on the invertibility of $\mathbf{P}_{11}$ and $\mathbf{P}_{22}$.
Last, the third primitive is denoted as
\begin{equation}
\mathbf{v}=\text{MiLAC}_3\left(\mathbf{u},\{Y_{i,k}\}_{i,k=1}^{P}\right),\label{eq:primitive3}
\end{equation}
and returns the output on all the $P$ ports of the \gls{milac}, i.e., the concatenation of \eqref{eq:primitive1} and \eqref{eq:primitive2}. The primitive \eqref{eq:primitive3} coincides with \eqref{eq:primitive1} when the input is applied on all the ports, i.e., $N=P$ and $M=0$.
The sequential use of different primitive operations computed in the analog domain can enable the efficient analog computation of complex algorithms.
Nevertheless, it remains less investigated in the literature \cite{amb10}-\cite{wan23}.
In the following, we show how to exploit these three primitives together with low-complexity digital operations to efficiently compute functions with practical interest, namely the \gls{lmmse} estimator, the matrix inversion, and the Kalman filter.

\section{Analog Computing for the LMMSE Estimator and Matrix Inversion}
\label{sec:lmmse-inv}

We have introduced a \gls{milac} as an analog computer that linearly processes microwave signals and we characterized its output as a function of its input and microwave network admittance matrix.
In this section, we show that a \gls{milac} can be used to compute the \gls{lmmse} estimator for linear observation processes and invert a matrix with significantly reduced computational complexity.

\subsection{LMMSE Estimator for Linear Observation Processes}

Consider a linear observation process in which the known random vector $\mathbf{y}\in\mathbb{C}^{Y\times 1}$, also denoted as the observation vector, is given by
\begin{equation}
\mathbf{y}=\mathbf{H}\mathbf{x}+\mathbf{n},\label{eq:lop}
\end{equation}
where $\mathbf{H}\in\mathbb{C}^{Y\times X}$ is a known constant matrix, $\mathbf{x}\in\mathbb{C}^{X\times 1}$ is the unknown random vector with mean $\bar{\mathbf{x}}=\mathbf{0}_{X\times 1}$ and covariance matrix $\mathbf{C}_\mathbf{x}$, and $\mathbf{n}\in\mathbb{C}^{Y\times 1}$ is the random noise vector with mean $\bar{\mathbf{n}}=\mathbf{0}_{Y\times 1}$ and covariance matrix $\mathbf{C}_\mathbf{n}$.
We assume that $\mathbf{C}_{\mathbf{x}}$ and $\mathbf{C}_{\mathbf{n}}$ are known and invertible, and that the cross-covariance matrix of $\mathbf{x}$ and $\mathbf{n}$ is $\mathbf{C}_{\mathbf{x}\mathbf{n}}=\mathbf{0}_{X\times Y}$.
According to estimation theory, an estimator $\hat{\mathbf{x}}\in\mathbb{C}^{X\times 1}$ of $\mathbf{x}$ is any function of $\mathbf{y}$, its estimation error vector writes as $\mathbf{e}=\hat{\mathbf{x}}-\mathbf{x}$, and the \gls{mse} is given by $\text{E}[\Vert\hat{\mathbf{x}}-\mathbf{x}\Vert_2^{2}]$ \cite[Chapter 2]{kay93}.
The \gls{lmmse} estimator $\hat{\mathbf{x}}_{\text{LMMSE}}$ of $\mathbf{x}$ is defined as the linear estimator minimizing the \gls{mse}, i.e.,
\begin{equation}
\hat{\mathbf{x}}_{\text{LMMSE}}=\underset{\hat{\mathbf{x}}}{\text{min}}\;
\text{E}\left[\left\Vert\hat{\mathbf{x}}-\mathbf{x}\right\Vert_2^{2}\right]
\;\mathsf{\mathrm{s.t.}}\;
\hat{\mathbf{x}}=\mathbf{W}\mathbf{y}+\mathbf{b},
\end{equation}
where $\mathbf{W}\in\mathbb{C}^{X\times Y}$ and $\mathbf{b}\in\mathbb{C}^{X\times 1}$.
The expression of the \gls{lmmse} estimator for a linear observation process is available in closed form, as given by the following proposition.

\begin{proposition}
Considering the observation process in \eqref{eq:lop}, the \gls{lmmse} estimator $\hat{\mathbf{x}}_{\text{LMMSE}}$ of $\mathbf{x}$ is equivalently given by
\begin{align}
\hat{\mathbf{x}}_{\textnormal{LMMSE},1}
&=\left(\left(\mathbf{H}^H\mathbf{C}_{\mathbf{n}}^{-1}\mathbf{H}+\mathbf{C}_{\mathbf{x}}^{-1}\right)^{-1}\mathbf{H}^H\mathbf{C}_{\mathbf{n}}^{-1}\right)\mathbf{y},\label{eq:LMMSE1}\\
\hat{\mathbf{x}}_{\textnormal{LMMSE},2}
&=\left(\mathbf{C}_{\mathbf{x}}\mathbf{H}^H\left(\mathbf{H}\mathbf{C}_{\mathbf{x}}\mathbf{H}^H+\mathbf{C}_{\mathbf{n}}\right)^{-1}\right)\mathbf{y}\label{eq:LMMSE2},
\end{align}
where it holds that $\hat{\mathbf{x}}_{\textnormal{LMMSE},1}=\hat{\mathbf{x}}_{\textnormal{LMMSE},2}$.
Furthermore, the \gls{lmmse} estimation error vector $\mathbf{e}=\hat{\mathbf{x}}_{\text{LMMSE}}-\mathbf{x}$ has mean $\bar{\mathbf{e}}=\mathbf{0}_{X\times 1}$ and covariance matrix $\mathbf{C}_{\mathbf{e}}$ equivalently given by
\begin{align}       
\mathbf{C}_{\mathbf{e},1}
&=\mathbf{C}_{\mathbf{x}}-\mathbf{C}_{\mathbf{x}}\mathbf{H}^H\left(\mathbf{H}\mathbf{C}_{\mathbf{x}}\mathbf{H}^H+\mathbf{C}_{\mathbf{n}}\right)^{-1}\mathbf{H}\mathbf{C}_{\mathbf{x}},\label{eq:Ce1}\\
\mathbf{C}_{\mathbf{e},2}
&=\left(\mathbf{H}^H\mathbf{C}_{\mathbf{n}}^{-1}\mathbf{H}+\mathbf{C}_{\mathbf{x}}^{-1}\right)^{-1},\label{eq:Ce2}
\end{align}
where we have $\mathbf{C}_{\mathbf{e},1}=\mathbf{C}_{\mathbf{e},2}$.
\label{pro:lmmse}
\end{proposition}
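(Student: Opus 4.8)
The plan is to derive the \gls{lmmse} estimator from the classical orthogonality principle and then obtain the two equivalent forms via a matrix identity. First I would recall that, since $\bar{\mathbf{x}}=\mathbf{0}$ and $\bar{\mathbf{n}}=\mathbf{0}$, we have $\bar{\mathbf{y}}=\mathbf{0}$, so the optimal bias is $\mathbf{b}=\mathbf{0}$ and the estimator reduces to $\hat{\mathbf{x}}=\mathbf{W}\mathbf{y}$. The orthogonality principle for the \gls{lmmse} estimator states that the optimal $\mathbf{W}$ must satisfy $\mathrm{E}[(\hat{\mathbf{x}}-\mathbf{x})\mathbf{y}^H]=\mathbf{0}$, i.e. $\mathbf{W}\,\mathrm{E}[\mathbf{y}\mathbf{y}^H]=\mathrm{E}[\mathbf{x}\mathbf{y}^H]$. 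Using \eqref{eq:lop} together with $\mathbf{C}_{\mathbf{x}\mathbf{n}}=\mathbf{0}$, I would compute $\mathrm{E}[\mathbf{y}\mathbf{y}^H]=\mathbf{H}\mathbf{C}_\mathbf{x}\mathbf{H}^H+\mathbf{C}_\mathbf{n}$ and $\mathrm{E}[\mathbf{x}\mathbf{y}^H]=\mathbf{C}_\mathbf{x}\mathbf{H}^H$, which immediately yields $\mathbf{W}=\mathbf{C}_\mathbf{x}\mathbf{H}^H(\mathbf{H}\mathbf{C}_\mathbf{x}\mathbf{H}^H+\mathbf{C}_\mathbf{n})^{-1}$ and hence \eqref{eq:LMMSE2}.

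Next I would establish the equivalence of \eqref{eq:LMMSE1} and \eqref{eq:LMMSE2}. The cleanest route is the push-through (matrix inversion lemma) identity: since $\mathbf{C}_\mathbf{x}$ and $\mathbf{C}_\mathbf{n}$ are invertible, one has
\begin{equation}
\left(\mathbf{H}^H\mathbf{C}_\mathbf{n}^{-1}\mathbf{H}+\mathbf{C}_\mathbf{x}^{-1}\right)^{-1}\mathbf{H}^H\mathbf{C}_\mathbf{n}^{-1}
=\mathbf{C}_\mathbf{x}\mathbf{H}^H\left(\mathbf{H}\mathbf{C}_\mathbf{x}\mathbf{H}^H+\mathbf{C}_\mathbf{n}\right)^{-1}.
\end{equation}
This can be verified directly by multiplying both sides on the left by $\mathbf{H}^H\mathbf{C}_\mathbf{n}^{-1}\mathbf{H}+\mathbf{C}_\mathbf{x}^{-1}$ and on the right by $\mathbf{H}\mathbf{C}_\mathbf{x}\mathbf{H}^H+\mathbf{C}_\mathbf{n}$, after which both sides collapse to $\mathbf{H}^H\mathbf{C}_\mathbf{n}^{-1}(\mathbf{H}\mathbf{C}_\mathbf{x}\mathbf{H}^H+\mathbf{C}_\mathbf{n})=(\mathbf{H}^H\mathbf{C}_\mathbf{n}^{-1}\mathbf{H}+\mathbf{C}_\mathbf{x}^{-1})\mathbf{C}_\mathbf{x}\mathbf{H}^H$, establishing $\hat{\mathbf{x}}_{\textnormal{LMMSE},1}=\hat{\mathbf{x}}_{\textnormal{LMMSE},2}$. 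Alternatively, this is precisely an instance of Proposition~\ref{pro:2x2} applied to $\mathbf{P}=\left[\begin{smallmatrix}\mathbf{C}_\mathbf{n} & \mathbf{H}\\ \mathbf{H}^H & -\mathbf{C}_\mathbf{x}^{-1}\end{smallmatrix}\right]$, comparing \eqref{eq:inv1-1}--\eqref{eq:inv1-4} with \eqref{eq:inv2-1}--\eqref{eq:inv2-4} (as Table~\ref{tab:MiLAC-LMMSE} suggests), which I would note as the reason the equivalence is not a coincidence.

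For the error covariance, I would compute $\mathbf{e}=\hat{\mathbf{x}}_{\text{LMMSE}}-\mathbf{x}=(\mathbf{W}\mathbf{H}-\mathbf{I}_X)\mathbf{x}+\mathbf{W}\mathbf{n}$, so $\bar{\mathbf{e}}=\mathbf{0}$ follows from $\bar{\mathbf{x}}=\mathbf{0}$, $\bar{\mathbf{n}}=\mathbf{0}$. Then, using the orthogonality property $\mathrm{E}[\mathbf{e}\mathbf{y}^H]=\mathbf{0}$ and writing $\mathbf{C}_\mathbf{e}=\mathrm{E}[\mathbf{e}\mathbf{e}^H]=\mathrm{E}[\mathbf{e}(\hat{\mathbf{x}}-\mathbf{x})^H]=-\mathrm{E}[\mathbf{e}\mathbf{x}^H]$ (since $\hat{\mathbf{x}}=\mathbf{W}\mathbf{y}$ is orthogonal to $\mathbf{e}$), I would get $\mathbf{C}_\mathbf{e}=\mathrm{E}[(\mathbf{x}-\mathbf{W}\mathbf{y})\mathbf{x}^H]=\mathbf{C}_\mathbf{x}-\mathbf{W}\mathbf{H}\mathbf{C}_\mathbf{x}=\mathbf{C}_\mathbf{x}-\mathbf{C}_\mathbf{x}\mathbf{H}^H(\mathbf{H}\mathbf{C}_\mathbf{x}\mathbf{H}^H+\mathbf{C}_\mathbf{n})^{-1}\mathbf{H}\mathbf{C}_\mathbf{x}$, which is \eqref{eq:Ce1}. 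The equivalence $\mathbf{C}_{\mathbf{e},1}=\mathbf{C}_{\mathbf{e},2}$ is then exactly the Woodbury identity $\mathbf{C}_\mathbf{x}-\mathbf{C}_\mathbf{x}\mathbf{H}^H(\mathbf{H}\mathbf{C}_\mathbf{x}\mathbf{H}^H+\mathbf{C}_\mathbf{n})^{-1}\mathbf{H}\mathbf{C}_\mathbf{x}=(\mathbf{H}^H\mathbf{C}_\mathbf{n}^{-1}\mathbf{H}+\mathbf{C}_\mathbf{x}^{-1})^{-1}$, again verifiable by multiplying through by $\mathbf{H}^H\mathbf{C}_\mathbf{n}^{-1}\mathbf{H}+\mathbf{C}_\mathbf{x}^{-1}$, or read off from Proposition~\ref{pro:2x2} with the same $\mathbf{P}$. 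The main obstacle is not conceptual but bookkeeping: one must be careful with conjugate transposes (so that Hermitian symmetry of $\mathbf{C}_\mathbf{x}$, $\mathbf{C}_\mathbf{n}$ is used correctly) and must explicitly invoke invertibility of $\mathbf{C}_\mathbf{x}$, $\mathbf{C}_\mathbf{n}$, and of the two Schur-complement-type matrices $\mathbf{H}\mathbf{C}_\mathbf{x}\mathbf{H}^H+\mathbf{C}_\mathbf{n}$ and $\mathbf{H}^H\mathbf{C}_\mathbf{n}^{-1}\mathbf{H}+\mathbf{C}_\mathbf{x}^{-1}$ — the latter guaranteed because $\mathbf{C}_\mathbf{x}^{-1}\succ\mathbf{0}$ and $\mathbf{H}^H\mathbf{C}_\mathbf{n}^{-1}\mathbf{H}\succeq\mathbf{0}$ — before applying any of the inversion identities.
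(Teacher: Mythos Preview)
Your derivation is correct and self-contained: you obtain \eqref{eq:LMMSE2} from the orthogonality principle, convert to \eqref{eq:LMMSE1} via the push-through identity, compute $\mathbf{C}_{\mathbf{e}}$ by exploiting orthogonality again, and close with the Woodbury identity for $\mathbf{C}_{\mathbf{e},1}=\mathbf{C}_{\mathbf{e},2}$. The paper, by contrast, does not prove the proposition at all in the body; it simply invokes \cite[Theorem~12.1]{kay93}. So your route is strictly more informative than the paper's, and your observation that the equivalences are instances of Proposition~\ref{pro:2x2} (with the block matrix of Table~\ref{tab:MiLAC-LMMSE}) is a nice touch that actually ties the result back into the paper's own machinery, something the authors leave implicit. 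The only minor point to tighten is the invertibility of $\mathbf{H}\mathbf{C}_\mathbf{x}\mathbf{H}^H+\mathbf{C}_\mathbf{n}$: you flag it as an obstacle but could state explicitly that it follows from $\mathbf{C}_\mathbf{n}\succ\mathbf{0}$ and $\mathbf{H}\mathbf{C}_\mathbf{x}\mathbf{H}^H\succeq\mathbf{0}$, mirroring your argument for the other Schur complement.
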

\begin{proof}
The proposition follows from \cite[Theorem~12.1]{kay93}.
\end{proof}

Interestingly, there is a remarkable similarity between the \gls{lmmse} expressions in \eqref{eq:LMMSE1}-\eqref{eq:Ce2} and the expressions characterizing the output of a \gls{milac}, as highlighted in Tab.~\ref{tab:MiLAC-LMMSE}.
Specifically, we notice a similarity between the \gls{lmmse} estimator expressions in \eqref{eq:LMMSE1} and \eqref{eq:LMMSE2} and the expressions of $\mathbf{v}_2$ in \eqref{eq:v2-1} and \eqref{eq:v2-2}, respectively.
Likewise, \eqref{eq:Ce1} and \eqref{eq:Ce2} have the same structure as the matrix multiplying $\mathbf{u}$ in \eqref{eq:v1-1} and \eqref{eq:v1-2}, respectively.
Thus, we can efficiently compute the \gls{lmmse} estimator and the covariance matrix of its error with a \gls{milac}, as specified in the following.

\subsection{Computing the LMMSE Estimator with a MiLAC}
\label{sec:lmmse-milac}

\begin{algorithm}[t]
\begin{algorithmic}[1]
\REQUIRE $\mathbf{y}$, $\mathbf{H}$, $\mathbf{C}_{\mathbf{x}}^{-1}$, and $\mathbf{C}_{\mathbf{n}}$.
\ENSURE $\hat{\mathbf{x}}_{\text{LMMSE}}$.
\STATE{
Set $\{Y_{i,k}\}_{i,k=1}^{P}$ by \eqref{eq:Yik-component-P} with $\mathbf{P}=\mathbf{P}_{\text{LMMSE}}$.}
\STATE{$\hat{\mathbf{x}}_{\text{LMMSE}}=\text{MiLAC}_2(\mathbf{y},\{Y_{i,k}\}_{i,k=1}^{P})$.}
\end{algorithmic}
\caption{LMMSE estimator with MiLAC}
\label{alg:LMMSE}
\end{algorithm}

\begin{algorithm}[t]
\begin{algorithmic}[1]
\REQUIRE $\mathbf{H}$, $\mathbf{C}_{\mathbf{x}}^{-1}$, and $\mathbf{C}_{\mathbf{n}}$.
\ENSURE $\mathbf{C}_{\mathbf{e}}$.
\STATE{Set $\{Y_{i,k}\}_{i,k=1}^{P}$ by \eqref{eq:Yik-component-P} with $\mathbf{P}=\mathbf{P}_{\text{Cov}}$.}
\FOR{$n=1\textbf{ to }N$}
\STATE{$[\mathbf{C}_{\mathbf{e}}]_{:,n}=\text{MiLAC}_1([\mathbf{I}_{N}]_{:,n},\{Y_{i,k}\}_{i,k=1}^{P})$.}
\ENDFOR
\end{algorithmic}
\caption{Covariance matrix of the LMMSE estimator error with MiLAC}
\label{alg:covariance}
\end{algorithm}

\begin{algorithm}[t!]
\begin{algorithmic}[1]
\REQUIRE $\mathbf{P}$.
\ENSURE $\mathbf{P}^{-1}$.
\STATE{Set $\{Y_{i,k}\}_{i,k=1}^{N}$ by \eqref{eq:Yik-component-P-M0}.}
\FOR{$n=1\textbf{ to }N$}
\STATE{$[\mathbf{P}^{-1}]_{:,n}=\text{MiLAC}_3([\mathbf{I}_{N}]_{:,n},\{Y_{i,k}\}_{i,k=1}^{N})$.}
\ENDFOR
\end{algorithmic}
\caption{Matrix inversion with MiLAC}
\label{alg:inversion}
\end{algorithm}

\begin{figure}[t]
\centering
\includegraphics[width=0.42\textwidth]{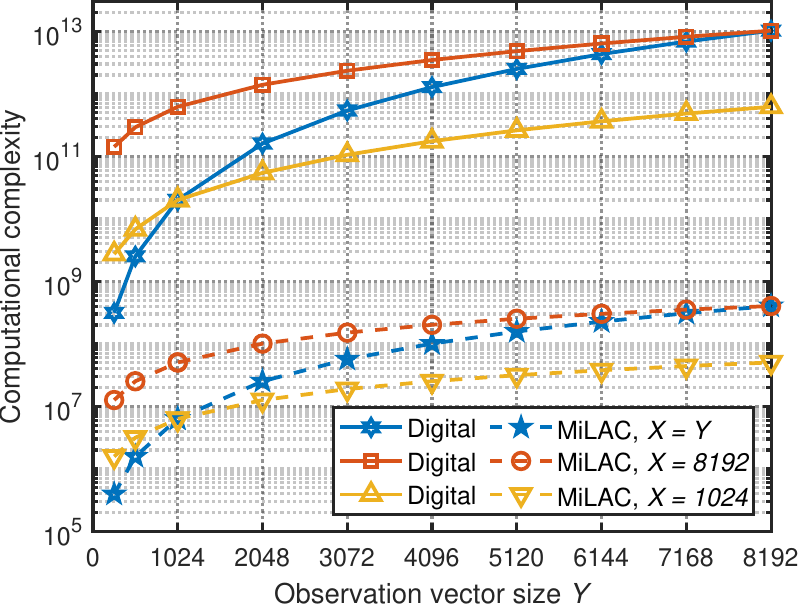}
\caption{Computational complexity of computing the LMMSE estimator given an observation vector with size $Y$, for various sizes $X$ of the unknown vector.}
\label{fig:complexity-LMMSE}
\end{figure}

\begin{figure}[t]
\centering
\includegraphics[width=0.42\textwidth]{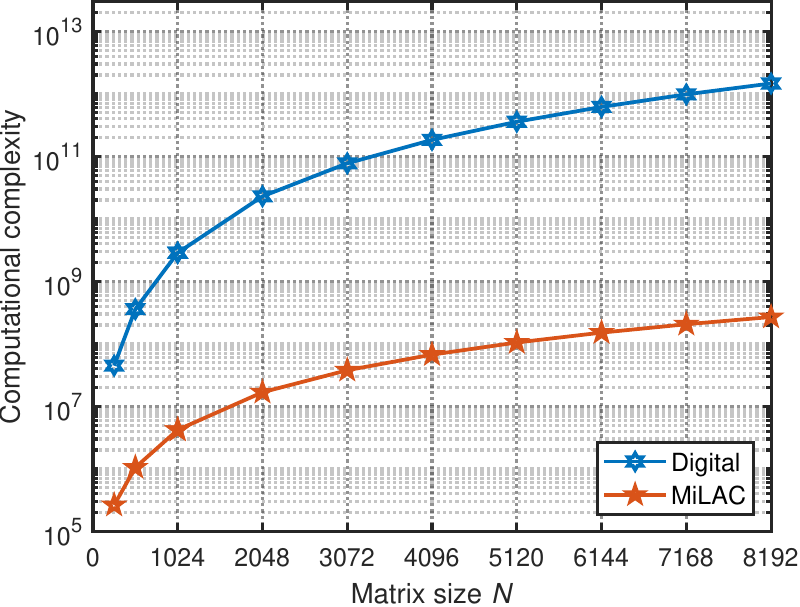}
\caption{Computational complexity of inverting a matrix with size $N$.}
\label{fig:complexity-inversion}
\end{figure}

The expressions of the \gls{lmmse} estimator in \eqref{eq:LMMSE1} and \eqref{eq:LMMSE2} can be obtained on the output $\mathbf{v}_2$ of a \gls{milac}, i.e., $\mathbf{v}_{2}=\hat{\mathbf{x}}_{\text{LMMSE},1}$ and $\mathbf{v}_{2}=\hat{\mathbf{x}}_{\text{LMMSE},2}$, by exploiting \eqref{eq:v2-1} and \eqref{eq:v2-2}, respectively.
To this end, the input vector of the \gls{milac} is set as $\mathbf{u}=\mathbf{y}$, and the tunable admittance components $\{Y_{i,k}\}_{i,k=1}^{P}$ are set according to \eqref{eq:Yik-component-P}, with $\mathbf{P}=\mathbf{P}_{\text{LMMSE}}$, where
\begin{equation}
\mathbf{P}_{\text{LMMSE}}=
\begin{bmatrix}
\pm\mathbf{C}_\mathbf{n} & \mathbf{H}\\
\mathbf{H}^H & \mp\mathbf{C}_\mathbf{x}^{-1}
\end{bmatrix}\label{eq:P-lmmse}
\end{equation}
are two values of $\mathbf{P}_{\text{LMMSE}}$ giving the desired output.
Both values of $\mathbf{P}_{\text{LMMSE}}$ included in \eqref{eq:P-lmmse} result in the desired \gls{lmmse} estimator expression, as it can be shown by analyzing the expression of the \gls{milac} output in \eqref{eq:v2-1} and \eqref{eq:v2-2}.
The steps necessary to compute the \gls{lmmse} estimator in the analog domain with a \gls{milac} are summarized in Alg.~\ref{alg:LMMSE}, where the \gls{milac} has $N=Y$ input ports and $M=X$ output ports with no input, namely a total of $P=X+Y$ ports.
Note that analog computers have been proposed to compute generalized regression, ridge regression, and linear regression through resistive memory arrays \cite{sun22,man23,man22}.
Interestingly, all these operations can be seen as special cases of the \gls{lmmse} estimator, confirming the generality of our model.

\begin{table*}[t]
\centering
\caption{Similarity between MiLAC output and Kalman filter.}
\begin{tabular}{@{}cc@{}}
\toprule
MiLAC output & Kalman filter equations\\
\midrule
$\mathbf{v}_1=-\left(\mathbf{P}_{12}\mathbf{P}_{22}^{-1}\mathbf{P}_{21}-\mathbf{P}_{11}\right)^{-1}\mathbf{u}$
&
$\mathbf{R}_{t\mid t-1}^{-1}=\left(\mathbf{A}_{t}\mathbf{R}_{t-1\mid t-1}\mathbf{A}_{t}^{H}+\mathbf{M}\right)^{-1}$
\\
\midrule
$\mathbf{v}_2=\left(\mathbf{P}_{22}^{-1}\mathbf{P}_{21}\left(\mathbf{P}_{12}\mathbf{P}_{22}^{-1}\mathbf{P}_{21}-\mathbf{P}_{11}\right)^{-1}\right)\mathbf{u}$
&
$\hat{\mathbf{x}}_{t\mid t}-\hat{\mathbf{x}}_{t\mid t-1}
=\left(\mathbf{R}_{t\mid t-1}\mathbf{H}_{t}^{H}\left(\mathbf{H}_{t}{\mathbf{R}}_{t\mid t-1}\mathbf{H}_{t}^{H}+\mathbf{N}\right)^{-1}\right)\left(\mathbf{y}_{t}-\mathbf{H}_{t}\hat{\mathbf{x}}_{t\mid t-1}\right)$
\\
\midrule
$\mathbf{v}_1=\left(\mathbf{P}_{11}^{-1}-\mathbf{P}_{11}^{-1}\mathbf{P}_{12}
\left(\mathbf{P}_{21}\mathbf{P}_{11}^{-1}\mathbf{P}_{12}-\mathbf{P}_{22}\right)^{-1}\mathbf{P}_{21}\mathbf{P}_{11}^{-1}\right)\mathbf{u}$
&
$\mathbf{R}_{t\mid t}
=\mathbf{R}_{t\mid t-1}-\mathbf{R}_{t\mid t-1}\mathbf{H}_{t}^{H}\left(\mathbf{H}_{t}{\mathbf{R}}_{t\mid t-1}\mathbf{H}_{t}^{H}+\mathbf{N}\right)^{-1}\mathbf{H}_{t}\mathbf{R}_{t\mid t-1}$
\\
\bottomrule
\end{tabular}
\label{tab:MiLAC-Kalman}
\end{table*}

\begin{remark}
A \gls{milac} can compute the \gls{lmmse} estimator with significantly reduced computational complexity compared to a digital computer, as no matrix-matrix product or matrix inversion operation is required.
While a \gls{milac} computes the \gls{lmmse} estimator entirely in the analog domain requiring no digital operation, computing the values of its tunable admittance components requires a digital computer.
Thus, the computational complexity of computing the \gls{lmmse} estimator with a \gls{milac} is given by the complexity of digitally computing the tunable admittance components in \eqref{eq:Yik-component-P}, in terms of the number of arithmetic real operations required.
By exploiting the fact that $\mathbf{P}$ in \eqref{eq:P-lmmse} is Hermitian and precomputing offline the computations involving $\mathbf{C}_{\mathbf{x}}$ and $\mathbf{C}_{\mathbf{n}}$, \eqref{eq:Yik-component-P} require approximately $2XY$ and $4XY$ real operations to be digitally computed for the two cases $i\neq k$ and $i=k$, respectively, giving $6XY$ real operations in total.
Conversely, the number of real operations required to compute \eqref{eq:LMMSE1} and \eqref{eq:LMMSE2} through digital computing is $8XY^2+8X^2Y+8X^3/3$ and $8X^2Y+8XY^2+8Y^3/3$, respectively, due to the two matrix-matrix products and the matrix inversion, whose complexity is assessed in the Appendix.
As \eqref{eq:LMMSE1} and \eqref{eq:LMMSE2} are equivalent, the complexity of computing the \gls{lmmse} estimator digitally is $\mathrm{min}\{8XY^2+8X^2Y+8X^3/3,8X^2Y+8XY^2+8Y^3/3\}$, i.e., $8(XY^2+X^2Y+\mathrm{min}\{X^3,Y^3\}/3)$.
The complexity of computing the \gls{lmmse} estimator with a \gls{milac} and by digital computing are compared in Fig.~\ref{fig:complexity-LMMSE}, showing a gain of $2.5\times 10^4$ times in the case $X=Y=8192$.
\label{rem:lmmse}
\end{remark}

\subsection{Computing the Covariance Matrix of the LMMSE Estimator Error with a MiLAC}

In addition to the \gls{lmmse} estimator, a \gls{milac} can also efficiently compute the covariance matrix of the \gls{lmmse} estimation error in \eqref{eq:Ce1} and \eqref{eq:Ce2} by exploiting \eqref{eq:v1-1} and \eqref{eq:v1-2}, respectively.
To this end, the tunable admittance components $\{Y_{i,k}\}_{i,k=1}^{P}$ are set as in \eqref{eq:Yik-component-P} with $\mathbf{P}=\mathbf{P}_{\text{Cov}}$ where
\begin{equation}
\mathbf{P}_{\text{Cov}}=
\begin{bmatrix}
\mathbf{C}_{\mathbf{x}}^{-1} & \mathbf{H}^H\\
\mathbf{H} & -\mathbf{C}_{\mathbf{n}}
\end{bmatrix},
\end{equation}
requiring a \gls{milac} with $N=X$ input ports and $M=Y$ output ports with no input, i.e., a total of $P=X+Y$ ports.
Thus, by setting the input vector as $\mathbf{u}=[\mathbf{I}_N]_{:,n}$, the $n$th column of the desired covariance matrix $\mathbf{C}_{\mathbf{e}}$ is returned on the output $\mathbf{v}_{1}$ following \eqref{eq:v1-1} and \eqref{eq:v1-2}, i.e., $\mathbf{v}_{1}=[\mathbf{C}_{\mathbf{e},1}]_{:,n}$ or, equivalently, $\mathbf{v}_{1}=[\mathbf{C}_{\mathbf{e},2}]_{:,n}$.
By iteratively considering all the input vectors $\mathbf{u}=[\mathbf{I}_N]_{:,n}$, for $n=1,\ldots,N$, all the $N$ columns of the matrix $\mathbf{C}_{\mathbf{e}}$ can be obtained, as summarized in Alg.~\ref{alg:covariance}.

\begin{remark}
By using a \gls{milac} to compute the covariance matrix of the \gls{lmmse} estimation error, we can achieve the same benefit in terms of computational complexity as observed for the computation of the \gls{lmmse} estimator in Remark~\ref{rem:lmmse}.
As a \gls{milac} computes the desired covariance matrix entirely in the analog domain requiring no digital operation, the required computational complexity is driven by the complexity of digitally computing the values of its tunable admittance components as in \eqref{eq:Yik-component-P}.
Specifically, the covariance matrix can be computed by a \gls{milac} with only $6XY$ real operations (digitally computed), while it would require $8(XY^2+X^2Y+\mathrm{min}\{X^3,Y^3\}/3)$ real operations to compute \eqref{eq:Ce1} or \eqref{eq:Ce2} by a digital computer.
\end{remark}

\subsection{Inverting a Matrix with a MiLAC}

A \gls{milac} with input on all ports ($N=P$ and $M=0$), whose input-output relationship is analyzed in Section~\ref{sec:analysis2}, can be used to efficiently compute the inverse of an arbitrary matrix $\mathbf{P}\in\mathbb{C}^{N\times N}$.
To this end, the tunable admittance components of the \gls{milac} $\{Y_{i,k}\}_{i,k=1}^{N}$ need to be set according to \eqref{eq:Yik-component-P-M0}, such that the output vector is $\mathbf{v}_1=\mathbf{P}^{-1}\mathbf{u}$, as discussed in Section~\ref{sec:analysis2}.
Thus, by setting the input vector as $\mathbf{u}=[\mathbf{I}_N]_{:,n}$, the $n$th column of $\mathbf{P}^{-1}$ is given on the output $\mathbf{v}_{1}$, i.e., $\mathbf{v}_{1}=[\mathbf{P}^{-1}]_{:,n}$.
To obtain all the $N$ columns of $\mathbf{P}^{-1}$, we can iteratively consider all the input vectors $\mathbf{u}=[\mathbf{I}_N]_{:,n}$, for $n=1,\ldots,N$, as reported in Alg.~\ref{alg:inversion}.
Note that analog computers specifically designed to compute the matrix inversion have appeared in the literature \cite{wu14,moh19,sun22}, confirming the practicability of such an approach to significantly decrease the complexity of the matrix inversion operation.
Since the prototypes in \cite{wu14,moh19} operate with \gls{rf} signals, they can be regarded as a specific implementation of a \gls{milac} having a fixed (non-reconfigurable) admittance matrix.
Besides, the resistive memory array used in \cite{sun22} cannot be regarded as an implementation of \gls{milac} as it does not compute with \gls{rf} signals.
However, it shares similar capabilities with a \gls{milac} as they are governed by the same \gls{em} laws, such as Ohm's law.

\begin{remark}
The computational complexity of this matrix inversion approach is driven by the number of operations required to compute the tunable admittance components according to \eqref{eq:Yik-component-P-M0}, given by approximately $4N^2$ real operations in total.
Once the tunable admittance components are reconfigured, a \gls{milac} can compute the matrix inverse with only $N$ measurements, where each measurement returns a column of the matrix inverse.
Thus, a \gls{milac} can compute the inverse of an arbitrary $N\times N$ matrix with complexity $\mathcal{O}(N^2)$, which is significantly less than the complexity of matrix inversion with digital computing given by $8N^3/3$, as discussed in the Appendix.
The complexity of inverting a matrix with a \gls{milac} and by digital computing are compared in Fig.~\ref{fig:complexity-inversion}, showing a gain of $5.5\times 10^3$ times in the case $N=8192$.
\end{remark}

\section{Analog Computing for the Kalman Filter}
\label{sec:kalman}

We have discussed how a \gls{milac} can compute the \gls{lmmse} estimator and perform matrix inversion with highly reduced computational complexity compared to digital computing.
In this section, we show that a \gls{milac} can also efficiently perform Kalman filtering, which remains unexplored in the context of analog computing \cite{amb10}-\cite{wan23}, and is a more general operation than the \gls{lmmse} estimator \cite{kal60}.

\begin{algorithm}[t]
\begin{algorithmic}[1]
\REQUIRE $\mathbf{A}_{t}$, $\mathbf{H}_{t}$, $\mathbf{M}$, $\mathbf{N}$, $\hat{\mathbf{x}}_{t-1\mid t-1}$, and $\mathbf{R}_{t-1\mid t-1}$.
\ENSURE $\hat{\mathbf{x}}_{t\mid t}$ and $\mathbf{R}_{t\mid t}$.
\STATEx{\textbf{Prediction:}}
\STATE{$\hat{\mathbf{x}}_{t\mid t-1}=\mathbf{A}_{t}{\hat{\mathbf{x}}}_{t-1\mid t-1}$.}
\STATE{$\mathbf{R}_{t\mid t-1}=\mathbf{A}_{t}\mathbf{R}_{t-1\mid t-1}\mathbf{A}_{t}^{H}+\mathbf{M}$.}
\STATEx{\textbf{Correction:}}
\STATE{$\mathbf{K}_{t}=\mathbf{R}_{t\mid t-1}\mathbf{H}_{t}^{H}\left(\mathbf{H}_{t}{\mathbf{R}}_{t\mid t-1}\mathbf{H}_{t}^{H}+\mathbf{N}\right)^{-1}$.}
\STATE{$\hat{\mathbf{x}}_{t\mid t}
=\hat{\mathbf{x}}_{t\mid t-1}
+\mathbf{K}_{t}\left(\mathbf{y}_{t}-\mathbf{H}_{t}\hat{\mathbf{x}}_{t\mid t-1}\right)$.}
\STATE{$\mathbf{R}_{t\mid t}
=\left(\mathbf{I}-\mathbf{K}_{t}\mathbf{H}_{t}\right)\mathbf{R}_{t\mid t-1}$.}
\end{algorithmic}
\caption{Kalman filter}
\label{alg:Kalman-digital}
\end{algorithm}

\begin{algorithm}[t]
\begin{algorithmic}[1]
\REQUIRE $\mathbf{A}_{t}$, $\mathbf{H}_{t}$, $\mathbf{M}$, $\mathbf{N}$, $\hat{\mathbf{x}}_{t-1\mid t-1}$, and $\mathbf{R}_{t-1\mid t-1}^{-1}$.
\ENSURE $\hat{\mathbf{x}}_{t\mid t}$ and $\mathbf{R}_{t\mid t}^{-1}$.
\STATEx{\textbf{Prediction:}}
\STATE{$\hat{\mathbf{x}}_{t\mid t-1}=\mathbf{A}_{t}{\hat{\mathbf{x}}}_{t-1\mid t-1}$.}
\STATE{Set $\mathbf{R}_{t\mid t-1}^{-1}$ by Alg.~2 with Input: $\mathbf{A}_{t}^H$, $\mathbf{M}$, and $\mathbf{R}_{t-1\mid t-1}^{-1}$.}
\STATEx{\textbf{Correction:}}
\STATE{$\bar{\mathbf{y}}_{t}=\mathbf{y}_{t}-\mathbf{H}_{t}\hat{\mathbf{x}}_{t\mid t-1}$.}
\STATE{Set $\bar{\mathbf{x}}_{t\mid t}$ by Alg.~1 with Input: $\bar{\mathbf{y}}_{t}$, $\mathbf{H}_{t}^H$, $\mathbf{N}$, and $\mathbf{R}_{t\mid t-1}^{-1}$.}
\STATE{$\hat{\mathbf{x}}_{t\mid t}=\bar{\mathbf{x}}_{t\mid t}+\hat{\mathbf{x}}_{t\mid t-1}$.}
\STATE{Set $\mathbf{R}_{t\mid t}$ by Alg.~2 with Input: $\mathbf{A}_{t}^H$, $\mathbf{R}_{t-1\mid t-1}^{-1}$, and $\mathbf{N}$.}
\STATE{Set $\mathbf{R}_{t\mid t}^{-1}$ by Alg.~3 with Input: $\mathbf{R}_{t\mid t}$.}
\end{algorithmic}
\caption{Kalman filter with MiLAC}
\label{alg:Kalman-analog}
\end{algorithm}

\subsection{Kalman Filter}

Consider a discrete-time dynamical process whose state at time $t$, denoted as $\mathbf{x}_{t}\in\mathbb{C}^{X\times 1}$, is a linear function of the state at time $t-1$, denoted as $\mathbf{x}_{t-1}\in\mathbb{C}^{X\times 1}$, given by
\begin{equation}
\mathbf{x}_{t}=\mathbf{A}_{t}\mathbf{x}_{t-1}+\mathbf{m}_{t},
\end{equation}
where $\mathbf{A}_{t}\in\mathbb{C}^{X\times X}$ is a known state transition matrix and $\mathbf{m}_{t}\in\mathbb{C}^{X\times 1}$ is the random state noise with mean $\bar{\mathbf{m}}=\mathbf{0}_{X\times 1}$ and covariance matrix $\mathbf{M}$.
At time $t$, an observation $\mathbf{y}_{t}\in\mathbb{C}^{Y\times 1}$ of the state $\mathbf{x}_{t}$ is acquired according to
\begin{equation}
\mathbf{y}_{t}=\mathbf{H}_{t}\mathbf{x}_{t}+\mathbf{n}_{t},
\end{equation}
where $\mathbf{H}_{t}\in\mathbb{C}^{Y\times X}$ is a known observation matrix and $\mathbf{n}_{t}\in\mathbb{C}^{Y\times 1}$ is the random observation noise with mean $\bar{\mathbf{n}}=\mathbf{0}_{Y\times 1}$ and covariance matrix $\mathbf{N}$\footnote{For simplicity, we use $X$ and $Y$ to denote the dimensionalities in both the \gls{lmmse} estimator and the Kalman filter, relying on context to distinguish their meaning.}.
The goal of the Kalman filter is to provide at time $t$ an estimate of the state $\mathbf{x}_{t}$ as a function of the estimate of the state $\mathbf{x}_{t-1}$ at time $t-1$ and the noisy observation $\mathbf{y}_{t}$ \cite[Chapter 13]{kay93}, \cite{bis01}.
To this end, the Kalman filter operates in two phases, namely ``prediction'' and ``correction'' \cite[Chapter 13]{kay93}, \cite{bis01}.

In the ``prediction'' phase, an a priori estimate of the current state is computed depending on the previous state estimate.
The a priori estimate of $\mathbf{x}_{t}$, denoted as $\hat{\mathbf{x}}_{t\mid t-1}$, is given by
\begin{equation}
\hat{\mathbf{x}}_{t\mid t-1}=\mathbf{A}_{t}\hat{\mathbf{x}}_{t-1\mid t-1},\label{eq:Kalman1}
\end{equation}
where $\hat{\mathbf{x}}_{t-1\mid t-1}$ is the a posteriori estimate of $\mathbf{x}_{t-1}$.
In addition, the Kalman filter computes the a priori estimate error covariance matrix $\mathbf{R}_{t\mid t-1}\in\mathbb{C}^{X\times X}$ as
\begin{equation}
\mathbf{R}_{t\mid t-1}=\mathbf{A}_{t}\mathbf{R}_{t-1\mid t-1}\mathbf{A}_{t}^{H}+\mathbf{M},\label{eq:Kalman2}
\end{equation}
as a function of the a posteriori estimate error covariance matrix at time $t-1$ denoted as $\mathbf{R}_{t-1\mid t-1}$.

\begin{figure}[t]
\centering
\includegraphics[width=0.42\textwidth]{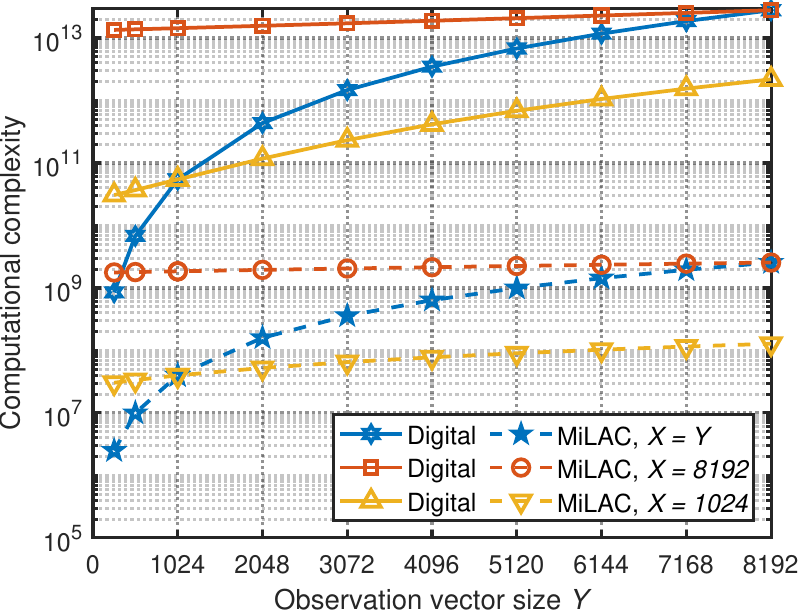}
\caption{Computational complexity of computing the Kalman filter given an observation vector with size $Y$, for various sizes $X$ of the state vector.}
\label{fig:complexity-Kalman}
\end{figure}

In the ``correction'' phase, the current state estimate is refined based on the current a priori estimate and the current observation.
This refined estimate is named the a posteriori estimate of the state.
After computing the so-called Kalman gain $\mathbf{K}_{t}\in\mathbb{C}^{X\times Y}$ as
\begin{equation}
\mathbf{K}_{t}=\mathbf{R}_{t\mid t-1}\mathbf{H}_{t}^{H}\left(\mathbf{H}_{t}{\mathbf{R}}_{t\mid t-1}\mathbf{H}_{t}^{H}+\mathbf{N}\right)^{-1},\label{eq:Kalman3}
\end{equation}
the a posteriori estimate of $\mathbf{x}_{t}$, denoted as $\hat{\mathbf{x}}_{t\mid t}$, is given by
\begin{equation}
\hat{\mathbf{x}}_{t\mid t}
=\hat{\mathbf{x}}_{t\mid t-1}+\mathbf{K}_{t}\left(\mathbf{y}_{t}-\mathbf{H}_{t}\hat{\mathbf{x}}_{t\mid t-1}\right),\label{eq:Kalman4}
\end{equation}
as a function of the a priori estimate $\hat{\mathbf{x}}_{t\mid t-1}$ and the observation $\mathbf{y}_{t}$.
Besides, the a posteriori estimate error covariance matrix is also computed as
\begin{equation}
\mathbf{R}_{t\mid t}=\left(\mathbf{I}-\mathbf{K}_{t}\mathbf{H}_{t}\right){\mathbf{R}}_{t\mid t-1},\label{eq:Kalman5}
\end{equation}
refining the a priori estimate error covariance matrix $\mathbf{R}_{t\mid t-1}$.
The two phases ``prediction'' and ``correction'' are sequentially repeated at each time step, as reported in Alg.~\ref{alg:Kalman-digital}\footnote{The provided complex-valued Kalman filter implicitly assumes circularly symmetric initial state, state noise, and observation noise. To treat non-circularly symmetric random variables, augmented complex-valued Kalman filters have been proposed \cite{din12}.}.
This process ensures that the Kalman filter output $\hat{\mathbf{x}}_{t\mid t}$ is the \gls{lmmse} estimator of $\mathbf{x}_{t}$ \cite[Chapter 13]{kay93}.

We observe an interesting similarity between the operations involved in the Kamlan filter \eqref{eq:Kalman1}-\eqref{eq:Kalman5} and the expressions computable with a \gls{milac}.
Specifically, we highlight three similarities, as visualized in Tab.~\ref{tab:MiLAC-Kalman}.
First, the inverse of \eqref{eq:Kalman2} has the same structure as the matrix multiplying $\mathbf{u}$ in \eqref{eq:v1-2}.
Second, $\hat{\mathbf{x}}_{t\mid t}-\hat{\mathbf{x}}_{t\mid t-1}$ given by \eqref{eq:Kalman4} has the same structure as \eqref{eq:v2-2}.
Third, \eqref{eq:Kalman5} has the same structure as the matrix multiplying $\mathbf{u}$ in \eqref{eq:v1-1}.
Thus, the Kalman filter can be efficiently computed with a \gls{milac}, as discussed in the following.

\subsection{Computing the Kalman Filter with a MiLAC}

The Kalman filter in Alg.~\ref{alg:Kalman-digital} can be computed by performing the most computationally expensive steps in the analog domain through a \gls{milac}.
This is shown in Alg.~\ref{alg:Kalman-analog}, which takes in input the matrix inverse $\mathbf{R}_{t-1\mid t-1}^{-1}$ and return the matrix inverse $\mathbf{R}_{t\mid t}^{-1}$, differently from Alg.~\ref{alg:Kalman-digital}, but without altering the logic of the Kalman filter.
In the ``prediction'' phase, Step~1 in Alg.~\ref{alg:Kalman-digital} has low computational complexity, i.e., $8X^2$ real operations (see Appendix), and can be digitally performed also in Alg.~\ref{alg:Kalman-analog}.
Besides, Step~2 in Alg.~\ref{alg:Kalman-digital} involves two matrix-matrix products, each with complexity $8X^3$ (see Appendix), leading to a complexity of $16X^3$.
Thus, it is replaced by step~2 in Alg.~\ref{alg:Kalman-analog}, requiring the same complexity as Alg.~\ref{alg:covariance}, i.e., $6X^2$ real operations.
In the ``correction'' phase, the computation of $\hat{\mathbf{x}}_{t\mid t}$ in Alg.~\ref{alg:Kalman-digital} through Steps~3 and 4 requires two matrix-matrix products and a matrix inversion, leading to a complexity of $8(XY^2+X^2Y+Y^3/3)$, as for the \gls{lmmse} estimator analyzed in Remark~\ref{rem:lmmse}.
To reduce complexity, these steps are replaced by Steps~3, 4, and 5 in Alg.~\ref{alg:Kalman-analog}, having a respective complexity of $8X^2$ (matrix-vector product, see Appendix), $6XY$ (same as Alg.~\ref{alg:LMMSE}, see Remark~\ref{rem:lmmse}), and $2X$ (sum of complex vectors, see Appendix).
Finally, Step~5 in Alg.~\ref{alg:Kalman-digital} requires two matrix-matrix products, and has complexity $8(X^2Y+X^3)$.
Thus, it is replaced by Steps~6 and 7 in Alg.~\ref{alg:Kalman-analog} having lower complexity, i.e., $6XY$ (same as Alg.~\ref{alg:covariance}) and $4X^2$ (same as Alg.~\ref{alg:inversion}), respectively.

\begin{remark}
A \gls{milac} can compute the Kalman filter with much lower computational complexity compared to conventional digital computing.
Specifically, the computational complexity of Alg.~\ref{alg:Kalman-analog} is given by adding the complexity of its Steps~1-7, i.e., $8X^2+6X^2+8X^2+6XY+2X+6XY+4X^2$, giving approximately $26X^2+12XY$ real operations.
In contrast, the complexity of digitally computing the Kalman filter via Alg.~\ref{alg:Kalman-digital} is given by adding the complexity of its Steps~1-5, namely $24X^3+16X^2Y+8XY^2+8Y^3/3$ real operations approximately.
The complexity of computing the Kalman filter with a \gls{milac} and by digital computing are compared in Fig.~\ref{fig:complexity-Kalman}, where we observe a gain of $1.1\times 10^4$ times when $X=Y=8192$.
This decrease in complexity could have critical implications in the field of financial forecasting, where Kalman filters are extensively used to predict the trend of markets and assets with stringent latency requirements \cite{mil24}.
\end{remark}

\begin{figure}[t]
\centering
\includegraphics[width=0.42\textwidth]{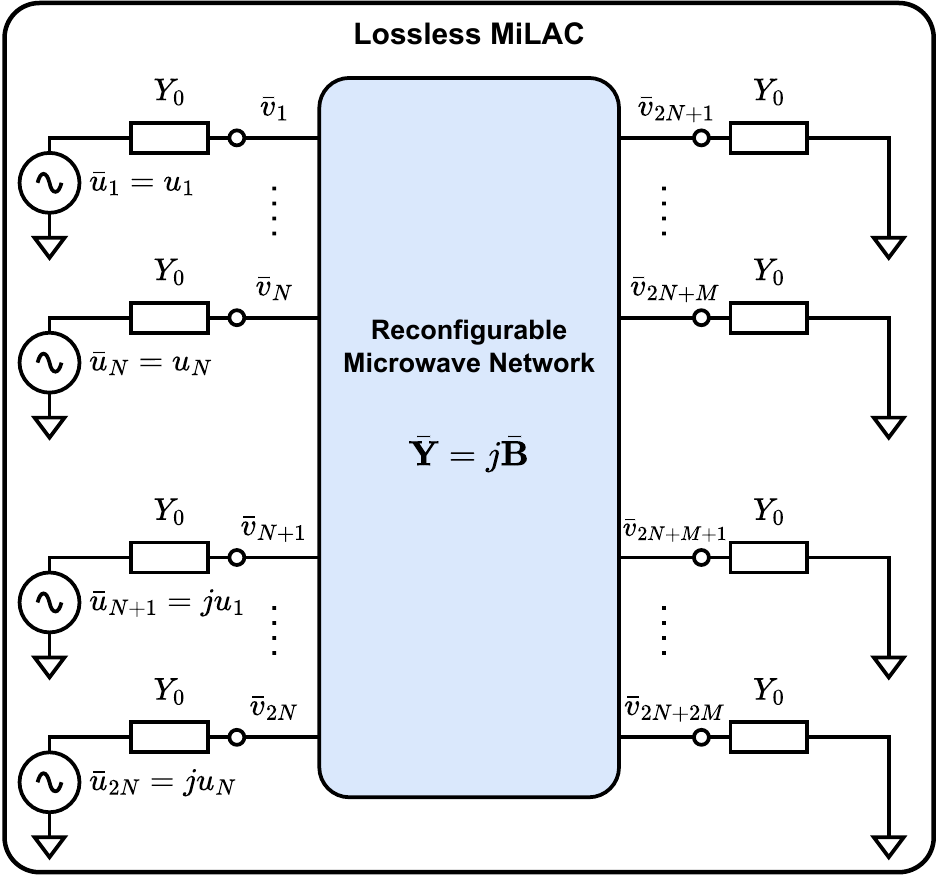}
\caption{Representation of a MiLAC implemented with a lossless microwave network.}
\label{fig:ac-imag}
\end{figure}

\section{Practical Implementation of a MiLAC with a Lossless Microwave Network}
\label{sec:implementation}

We have modeled and analyzed a \gls{milac} made of tunable admittance components allowed to take any complex value, with arbitrary real part (conductance) and imaginary part (susceptance).
However, the use of tunable admittance components with non-zero real parts in a \gls{milac} should be avoided for the following two reasons.
First, positive conductance values dissipate power, causing losses and increasing power consumption.
Second, negative conductance values require active components, which are costly and require additional power supply.
To solve these problems, in this section, we show that it is possible to realize a fully capable \gls{milac} with purely imaginary tunable admittance components, i.e., with a lossless microwave network.

Consider a \gls{milac} implemented through a lossless microwave network with an admittance matrix $\bar{\mathbf{Y}}\in\mathbb{C}^{2P\times2P}$, receiving the input on the first $2N$ ports, where $P=N+M$, $M>0$, as shown in Fig.~\ref{fig:ac-imag}.
Since it is made of purely imaginary admittance components, $\bar{\mathbf{Y}}$ is purely imaginary, and is given by $\bar{\mathbf{Y}}=j\bar{\mathbf{B}}$, where $\bar{\mathbf{B}}\in\mathbb{R}^{2P\times2P}$ is the susceptance matrix of the lossless microwave network.
We denote the voltage of the $n$th voltage source as $\bar{u}_n$ , for $n=1,\ldots,2N$, and the input vector as $\bar{\mathbf{u}}=[\bar{u}_1,\ldots,\bar{u}_{2N}]^T\in\mathbb{C}^{2N\times1}$.
Besides, we denote as $\bar{v}_p$ the voltage at the $p$th port of the microwave network, for $p=1,\ldots,2P$, and the output vector as $\bar{\mathbf{v}}=[\bar{v}_1,\ldots,\bar{v}_{2P}]^T\in\mathbb{C}^{2P\times1}$, which can be partitioned as $\bar{\mathbf{v}}=[\bar{\mathbf{v}}_1^T,\bar{\mathbf{v}}_2^T]^T$, where $\bar{\mathbf{v}}_1=[\bar{v}_1,\ldots,\bar{v}_{2N}]^T\in\mathbb{C}^{2N\times1}$ and $\bar{\mathbf{v}}_2=[\bar{v}_{2N+1},\ldots,\bar{v}_{2N+2M}]^T\in\mathbb{C}^{2M\times1}$.

According to the analysis in Section~\ref{sec:analysis1}, the output $\bar{\mathbf{v}}=[\bar{\mathbf{v}}_1^T,\bar{\mathbf{v}}_2^T]^T$ is given as a function of the input $\bar{\mathbf{u}}$ and the susceptance matrix $\bar{\mathbf{B}}$ as
\begin{equation}
\begin{bmatrix}
\bar{\mathbf{v}}_1\\
\bar{\mathbf{v}}_2
\end{bmatrix}
=
\begin{bmatrix}
\bar{\mathbf{Q}}_{11}\\
\bar{\mathbf{Q}}_{21}
\end{bmatrix}
\bar{\mathbf{u}},\label{eq:v-bar}
\end{equation}
where $\bar{\mathbf{Q}}_{11}$ and $\bar{\mathbf{Q}}_{21}$ are blocks of the matrix
\begin{equation}
\bar{\mathbf{Q}}=
\begin{bmatrix}
\bar{\mathbf{Q}}_{11} & \bar{\mathbf{Q}}_{12}\\
\bar{\mathbf{Q}}_{21} & \bar{\mathbf{Q}}_{22}
\end{bmatrix}
=\bar{\mathbf{P}}^{-1},
\end{equation}
with $\bar{\mathbf{Q}}_{11}\in\mathbb{C}^{2N\times 2N}$, $\bar{\mathbf{Q}}_{12}\in\mathbb{C}^{2N\times 2M}$, $\bar{\mathbf{Q}}_{21}\in\mathbb{C}^{2M\times 2N}$, $\bar{\mathbf{Q}}_{22}\in\mathbb{C}^{2M\times 2M}$, where
\begin{equation}
\bar{\mathbf{P}}=\frac{j\bar{\mathbf{B}}}{Y_0}+\mathbf{I}_{2P}.\label{eq:P-bar}
\end{equation}

Interestingly, it is possible to prove that the output of this lossless \gls{milac} is given by $\bar{\mathbf{v}}_1=[-j\mathbf{v}_1^T,\mathbf{v}_1^T]^T$ and $\bar{\mathbf{v}}_2=[-j\mathbf{v}_2^T,\mathbf{v}_2^T]^T$, where $\mathbf{v}_1$ and $\mathbf{v}_2$ are given by \eqref{eq:v}, when the input is set as $\bar{\mathbf{u}}=[\mathbf{u}^T,j\mathbf{u}^T]^T$ and the susceptance matrix is set as
\begin{equation}
\bar{\mathbf{B}}=
\begin{bmatrix}
\bar{\mathbf{B}}_{11} & \bar{\mathbf{B}}_{12}\\
\bar{\mathbf{B}}_{21} & \bar{\mathbf{B}}_{22}
\end{bmatrix},\label{eq:B-bar-1}
\end{equation}
where
\begin{equation}
\bar{\mathbf{B}}_{11}=
\begin{bmatrix}
 \Re\{\mathbf{Y}_{11}\} & \Im\{\mathbf{Y}_{11}\}\\
-\Im\{\mathbf{Y}_{11}\} & \Re\{\mathbf{Y}_{11}\}
\end{bmatrix}+Y_0
\begin{bmatrix}
 \mathbf{I}_{N} & \mathbf{I}_{N}\\
-\mathbf{I}_{N} & \mathbf{I}_{N}
\end{bmatrix},\label{eq:B-bar-2}
\end{equation}
\begin{equation}
\bar{\mathbf{B}}_{22}=
\begin{bmatrix}
 \Re\{\mathbf{Y}_{22}\} & \Im\{\mathbf{Y}_{22}\}\\
-\Im\{\mathbf{Y}_{22}\} & \Re\{\mathbf{Y}_{22}\}
\end{bmatrix}+Y_0
\begin{bmatrix}
 \mathbf{I}_{M} & \mathbf{I}_{M}\\
-\mathbf{I}_{M} & \mathbf{I}_{M}
\end{bmatrix},\label{eq:B-bar-3}
\end{equation}
\begin{equation}
\bar{\mathbf{B}}_{ik}=
\begin{bmatrix}
 \Re\{\mathbf{Y}_{ik}\} & \Im\{\mathbf{Y}_{ik}\}\\
-\Im\{\mathbf{Y}_{ik}\} & \Re\{\mathbf{Y}_{ik}\}\\
\end{bmatrix},
\label{eq:B-bar-4}
\end{equation}
for $ik\in\{12,21\}$, as represented in Fig.~\ref{fig:ac-imag}.
Thus, it is possible to implement the \gls{milac} analyzed in Section~\ref{sec:analysis1} and compute the functions in \eqref{eq:v1-1}-\eqref{eq:v2-1} and \eqref{eq:v1-2}-\eqref{eq:v2-2} by only employing lossless tunable admittance components.
To prove this result, we need to show that it holds
\begin{equation}
\left(\frac{j\bar{\mathbf{B}}}{Y_0}+\mathbf{I}_{2P}\right)
\begin{bmatrix}
-j\mathbf{v}_1\\
\mathbf{v}_1\\
-j\mathbf{v}_2\\
\mathbf{v}_2
\end{bmatrix}=
\begin{bmatrix}
\mathbf{u}\\
j\mathbf{u}\\
\mathbf{0}\\
\mathbf{0}
\end{bmatrix},\label{eq:lossless-proof-1}
\end{equation}
when $\bar{\mathbf{B}}$ is set as \eqref{eq:B-bar-1}-\eqref{eq:B-bar-4}.
Instead of proving \eqref{eq:lossless-proof-1}, we can introduce the permutation matrix $\boldsymbol{\Pi}\in\mathbb{C}^{2P\times2P}$ as
\begin{equation}
\boldsymbol{\Pi}=
\begin{bmatrix}
\mathbf{I}_{N} & \mathbf{0}_{N} & \mathbf{0}_{N\times M} & \mathbf{0}_{N\times M}\\
\mathbf{0}_{M\times N} & \mathbf{0}_{M\times N} & \mathbf{I}_{M} & \mathbf{0}_{M}\\
\mathbf{0}_{N} & \mathbf{I}_{N} & \mathbf{0}_{N\times M} & \mathbf{0}_{N\times M}\\
\mathbf{0}_{M\times N} & \mathbf{0}_{M\times N} & \mathbf{0}_{M} & \mathbf{I}_{M}
\end{bmatrix},
\end{equation}
and show that it holds
\begin{equation}
\boldsymbol{\Pi}\left(\frac{j\bar{\mathbf{B}}}{Y_0}+\mathbf{I}_{2P}\right)\boldsymbol{\Pi}^T\boldsymbol{\Pi}
\begin{bmatrix}
-j\mathbf{v}_1\\
\mathbf{v}_1\\
-j\mathbf{v}_2\\
\mathbf{v}_2
\end{bmatrix}
=\boldsymbol{\Pi}
\begin{bmatrix}
\mathbf{u}\\
j\mathbf{u}\\
\mathbf{0}\\
\mathbf{0}
\end{bmatrix},\label{eq:lossless-proof-2}
\end{equation}
which is equivalent to \eqref{eq:lossless-proof-1} since $\boldsymbol{\Pi}^T\boldsymbol{\Pi}=\mathbf{I}_{2P}$.
By recalling $\mathbf{v}=[\mathbf{v}_1^T,\mathbf{v}_2^T]^T$ and $\tilde{\mathbf{u}}=[\mathbf{u}^T,\mathbf{0}_{M\times1}^T]^T$, \eqref{eq:lossless-proof-2} becomes
\begin{equation}
\boldsymbol{\Pi}\left(\frac{j\bar{\mathbf{B}}}{Y_0}+\mathbf{I}_{2P}\right)\boldsymbol{\Pi}^T
\begin{bmatrix}
-j\mathbf{v}\\
\mathbf{v}\\
\end{bmatrix}=
\begin{bmatrix}
\tilde{\mathbf{u}}\\
j\tilde{\mathbf{u}}\\
\end{bmatrix},\label{eq:lossless-proof-3}
\end{equation}
in which the term $\boldsymbol{\Pi}(j\bar{\mathbf{B}}/Y_0+\mathbf{I}_{2P})\boldsymbol{\Pi}^T$ can be rewritten as
\begin{align}
\boldsymbol{\Pi}\left(\frac{j\bar{\mathbf{B}}}{Y_0}+\mathbf{I}_{2P}\right)\boldsymbol{\Pi}^T
=&\frac{j}{Y_0}\boldsymbol{\Pi}\bar{\mathbf{B}}\boldsymbol{\Pi}^T+\mathbf{I}_{2P}\label{eq:lossless-term-1}\\
=&\frac{j}{Y_0}
\begin{bmatrix}
 \Re\{\mathbf{Y}\} & \Im\{\mathbf{Y}\}\\
-\Im\{\mathbf{Y}\} & \Re\{\mathbf{Y}\}
\end{bmatrix}\label{eq:lossless-term-2}\\
&+j
\begin{bmatrix}
 \mathbf{I}_{P} & \mathbf{I}_{P}\\
-\mathbf{I}_{P} & \mathbf{I}_{P}
\end{bmatrix}
+\mathbf{I}_{2P},\label{eq:lossless-term-3}
\end{align}
where in \eqref{eq:lossless-term-1} we used $\boldsymbol{\Pi}\boldsymbol{\Pi}^T=\mathbf{I}_{2P}$ and in \eqref{eq:lossless-term-2}-\eqref{eq:lossless-term-3} we rewrote $\boldsymbol{\Pi}\bar{\mathbf{B}}\boldsymbol{\Pi}^T$ recalling that $\bar{\mathbf{B}}$ is set as in \eqref{eq:B-bar-1}-\eqref{eq:B-bar-4}.
Substituting \eqref{eq:lossless-term-1}-\eqref{eq:lossless-term-3} into \eqref{eq:lossless-proof-3} and executing the products, \eqref{eq:lossless-proof-3} simplifies as
\begin{equation}
\frac{1}{Y_0}
\begin{bmatrix}
\mathbf{Y}\mathbf{v}\\
j\mathbf{Y}\mathbf{v}
\end{bmatrix}
+
\begin{bmatrix}
\mathbf{v}\\
j\mathbf{v}
\end{bmatrix}
=
\begin{bmatrix}
\tilde{\mathbf{u}}\\
j\tilde{\mathbf{u}}
\end{bmatrix},
\end{equation}
which holds true given the relationship in \eqref{eq:v-tmp}-\eqref{eq:P}, proving that a fully capable \gls{milac} as analyzed in Section~\ref{sec:analysis1} can be implemented with lossless microwave network.
Notably, a drawback of using lossless admittance components, which have purely imaginary values, is that the number of required components increases to $(2P)^2$, compared to the $P^2$ components needed to implement a \gls{milac} with complex admittance components.

A similar discussion is valid for a \gls{milac} with input applied on all the ports, i.e., with $M=0$.
In detail, considering a \gls{milac} implemented with a lossless network as in Fig.~\ref{fig:ac-imag} but with $M=0$, the analysis in Section~\ref{sec:analysis2} gives that
\begin{equation}
\bar{\mathbf{v}}_1=\bar{\mathbf{P}}^{-1}\bar{\mathbf{u}},
\end{equation}
where
\begin{equation}
\bar{\mathbf{P}}=\frac{j\bar{\mathbf{B}}}{Y_0}+\mathbf{I}_{2N}.
\end{equation}
Thus, it is possible to show that we obtain $\bar{\mathbf{v}}_1=[-j\mathbf{v}_1^T,\mathbf{v}_1^T]^T$, where $\mathbf{v}_1$ is given by \eqref{eq:v-M0}-\eqref{eq:P-M0}, when the input is set as $\bar{\mathbf{u}}=[\mathbf{u}^T,j\mathbf{u}^T]^T$ and the susceptance matrix is set as
\begin{equation}
\bar{\mathbf{B}}=
\begin{bmatrix}
\Re\{\mathbf{Y}\} & \Im\{\mathbf{Y}\}\\
-\Im\{\mathbf{Y}\} &  \Re\{\mathbf{Y}\}
\end{bmatrix}+Y_0
\begin{bmatrix}
 \mathbf{I}_{N} & \mathbf{I}_{N}\\
-\mathbf{I}_{N} & \mathbf{I}_{N}
\end{bmatrix}.
\end{equation}
The proof of this result is omitted for conciseness as it is similar to the proof for the case $M>0$.
Consequently, we can implement the \gls{milac} analyzed in Section~\ref{sec:analysis2} to compute \eqref{eq:v-M0}-\eqref{eq:P-M0} by only using lossless tunable admittance components.

\section{Conclusion}
\label{sec:conclusion}

Analog computing is undergoing a revival, promising low-power and massively parallel computation capabilities for applications in signal processing and communications.
In Part~I of this paper, we introduce the concept of \gls{milac} as a generic analog computer that linearly operates with microwave signals, and analyze its application to signal processing.
A \gls{milac} can generally be modeled as a multiport microwave network made of tunable impedance (or admittance) components.
At the ports of such a microwave network, the input signals are applied and the output signals are read.
By using rigorous microwave theory, we derive the expression of the output of a \gls{milac} as a function of its input and the values of its admittance components.
We show that a \gls{milac} can efficiently compute the \gls{lmmse} estimator and invert an arbitrary matrix in the analog domain, with significantly reduced computational complexity over conventional digital computing.
In particular, matrix inversion can be performed with complexity growing with the square of the matrix size, rather than the cube as with digital computing.
The primitive operations computable with a \gls{milac} can serve as building blocks for implementing more complex algorithms, such as the Kalman filter, directly in the analog domain, achieving a substantial decrease in computational complexity.
While we have shown how to efficiently compute certain common operations with \gls{milac}, future research could investigate whether other operations can also be efficiently computed in the analog domain with reconfigurable microwave networks.
Finally, to facilitate the practical implementation of a \gls{milac}, we present a design based on lossless admittance components, avoiding losses and costly active \gls{rf} components.
Further future research could investigate the computational capabilities of microwave networks in the presence of hardware non-idealities, such as lossy interconnections.
For instance, this can be done departing from previous literature on \gls{bd-ris} \cite{del25}, where fully-connected microwave networks have been represented by separating the effects of the interconnections and the tunable components.
Also assessing the power consumption of \gls{milac} and validating its practical benefits are interesting future research directions.

In Part~II of this paper, we discuss the application of \gls{milac} to wireless communications, showing that it can enable gigantic \gls{mimo} communications by performing computations and beamforming in the analog domain.

\section*{Appendix}

In this appendix, we review the computational complexity of relevant matrix operations performed with classical digital computing.
We define the computational complexity of an operation as the number of arithmetic real operations required to complete it, where these operations include addition, subtraction, multiplication, and division.
Following this definition, we assume that computing the transpose or the conjugate transpose of a matrix requires no operations.
For operations on complex numbers, we recall that addition and subtraction require two real operations, while multiplication requires six real operations.

\subsubsection{Scalar Product}

Given two vectors $\mathbf{a}\in\mathbb{C}^{1\times N}$ and $\mathbf{b}\in\mathbb{C}^{N\times 1}$, the scalar product $\mathbf{a}\mathbf{b}$ requires $N-1$ complex additions and $N$ complex multiplications, since it is given by $[\mathbf{a}]_1[\mathbf{b}]_1+\ldots+[\mathbf{a}]_N[\mathbf{b}]_N$.
Considering that complex additions and multiplications require two and six real operations, respectively, the complexity of the scalar product is approximately $8N$ real operations.

\subsubsection{Matrix-Vector Product}

Given a matrix $\mathbf{A}\in\mathbb{C}^{M\times N}$ and a vector $\mathbf{b}\in\mathbb{C}^{N\times 1}$, the matrix-vector product $\mathbf{A}\mathbf{b}$ can be expressed as $\mathbf{A}\mathbf{b}=[\mathbf{a}_1\mathbf{b},\ldots,\mathbf{a}_M\mathbf{b}]^T$, where $\mathbf{a}_m\in\mathbb{C}^{1\times N}$ is the $m$th row of $\mathbf{A}$, for $m=1,\ldots,M$.
Thus, it requires $M$ times the complexity of the scalar product $\mathbf{a}\mathbf{b}$, which is approximately $8MN$ real operations.

\subsubsection{Matrix-Matrix Product}

Given two matrices $\mathbf{A}\in\mathbb{C}^{M\times N}$ and $\mathbf{B}\in\mathbb{C}^{N\times L}$, the matrix-matrix product $\mathbf{A}\mathbf{B}$ can be expressed as $\mathbf{A}\mathbf{B}=[\mathbf{A}\mathbf{b}_1,\ldots,\mathbf{A}\mathbf{b}_L]$, where $\mathbf{b}_\ell\in\mathbb{C}^{N\times 1}$ is the $\ell$th column of $\mathbf{B}$, for $\ell=1,\ldots,L$.
Thus, it requires $L$ times the complexity of the matrix-vector product $\mathbf{A}\mathbf{b}$, which is approximately $8LMN$ real operations.

\subsubsection{Matrix Inversion}

Given an invertible matrix $\mathbf{A}\in\mathbb{C}^{N\times N}$, the algorithm commonly used to compute the matrix inverse $\mathbf{A}^{-1}$ is the Gauss's method \cite[Chapter 1]{far88}.
Thus, we regard the complexity of matrix inversion as equal to the complexity of the Gauss's method algorithm.
It can be shown following \cite[Chapter 1]{far88} that the Gauss’s method requires $N(N+1)/2$ divisions, $(2N^3+3N^2-5N)/6$ multiplications, and $(2N^3+3N^2-5N)/6$ subtractions.
Consequently, the algorithm’s complexity is dominated by approximately $N^3/3$ multiplications and $N^3/3$ subtractions, which require a total of $8N^3/3$ real operations.

\bibliographystyle{IEEEtran}
\bibliography{IEEEabrv,main}

\begin{thebibliography}{10}
\providecommand{\url}[1]{#1}
\csname url@samestyle\endcsname
\providecommand{\newblock}{\relax}
\providecommand{\bibinfo}[2]{#2}
\providecommand{\BIBentrySTDinterwordspacing}{\spaceskip=0pt\relax}
\providecommand{\BIBentryALTinterwordstretchfactor}{4}
\providecommand{\BIBentryALTinterwordspacing}{\spaceskip=\fontdimen2\font plus
\BIBentryALTinterwordstretchfactor\fontdimen3\font minus \fontdimen4\font\relax}
\providecommand{\BIBforeignlanguage}[2]{{%
\expandafter\ifx\csname l@#1\endcsname\relax
\typeout{** WARNING: IEEEtran.bst: No hyphenation pattern has been}%
\typeout{** loaded for the language `#1'. Using the pattern for}%
\typeout{** the default language instead.}%
\else
\language=\csname l@#1\endcsname
\fi
#2}}
\providecommand{\BIBdecl}{\relax}
\BIBdecl

\bibitem{ner25-1}
M.~Nerini and B.~Clerckx, ``Analog computing with microwave networks,'' \emph{arXiv preprint arXiv:2504.06790v1}, 2025.

\bibitem{pro96}
J.~G. Proakis and D.~G. Manolakis, \emph{Digital signal processing: Principles, algorithms, and applications}.\hskip 1em plus 0.5em minus 0.4em\relax USA: Prentice-Hall, Inc., 1996.

\bibitem{cal13}
C.~Caloz, S.~Gupta, Q.~Zhang, and B.~Nikfal, ``Analog signal processing: A possible alternative or complement to dominantly digital radio schemes,'' \emph{IEEE Microw. Mag.}, vol.~14, no.~6, pp. 87--103, 2013.

\bibitem{amb10}
P.~Ambs, ``Optical computing: A 60-year adventure,'' \emph{Advances in Optical Technologies}, vol. 2010, no.~1, p. 372652, 2010.

\bibitem{sol15}
D.~R. Solli and B.~Jalali, ``Analog optical computing,'' \emph{Nature Photonics}, vol.~9, no.~11, pp. 704--706, 2015.

\bibitem{sil14}
A.~Silva, F.~Monticone, G.~Castaldi, V.~Galdi, A.~Al{\`u}, and N.~Engheta, ``Performing mathematical operations with metamaterials,'' \emph{Science}, vol. 343, no. 6167, pp. 160--163, 2014.

\bibitem{zan21}
F.~Zangeneh-Nejad, D.~L. Sounas, A.~Al{\`u}, and R.~Fleury, ``Analogue computing with metamaterials,'' \emph{Nature Reviews Materials}, vol.~6, no.~3, pp. 207--225, 2021.

\bibitem{iel18}
D.~Ielmini and H.-S.~P. Wong, ``In-memory computing with resistive switching devices,'' \emph{Nature electronics}, vol.~1, no.~6, pp. 333--343, 2018.

\bibitem{wan20}
Z.~Wang, H.~Wu, G.~W. Burr, C.~S. Hwang, K.~L. Wang, Q.~Xia, and J.~J. Yang, ``Resistive switching materials for information processing,'' \emph{Nature Reviews Materials}, vol.~5, no.~3, pp. 173--195, 2020.

\bibitem{mat19}
M.~W. Matth{\`e}s, P.~Del~Hougne, J.~De~Rosny, G.~Lerosey, and S.~M. Popoff, ``Optical complex media as universal reconfigurable linear operators,'' \emph{Optica}, vol.~6, no.~4, pp. 465--472, 2019.

\bibitem{wu14}
K.~Wu, C.~Soci, P.~P. Shum, and N.~I. Zheludev, ``Computing matrix inversion with optical networks,'' \emph{Optics express}, vol.~22, no.~1, pp. 295--304, 2014.

\bibitem{wet20}
G.~Wetzstein, A.~Ozcan, S.~Gigan, S.~Fan, D.~Englund, M.~Solja{\v{c}}i{\'c}, C.~Denz, D.~A. Miller, and D.~Psaltis, ``Inference in artificial intelligence with deep optics and photonics,'' \emph{Nature}, vol. 588, no. 7836, pp. 39--47, 2020.

\bibitem{zho20}
Y.~Zhou, H.~Zheng, I.~I. Kravchenko, and J.~Valentine, ``Flat optics for image differentiation,'' \emph{Nature Photonics}, vol.~14, no.~5, pp. 316--323, 2020.

\bibitem{lin18}
X.~Lin, Y.~Rivenson, N.~T. Yardimci, M.~Veli, Y.~Luo, M.~Jarrahi, and A.~Ozcan, ``All-optical machine learning using diffractive deep neural networks,'' \emph{Science}, vol. 361, no. 6406, pp. 1004--1008, 2018.

\bibitem{wei18}
H.~Wei, G.~Huang, X.~Wei, Y.~Sun, and H.~Wang, ``Comment on "{All}-optical machine learning using diffractive deep neural networks",'' \emph{arXiv preprint arXiv:1809.08360}, 2018.

\bibitem{cha18}
J.~Chang, V.~Sitzmann, X.~Dun, W.~Heidrich, and G.~Wetzstein, ``Hybrid optical-electronic convolutional neural networks with optimized diffractive optics for image classification,'' \emph{Scientific reports}, vol.~8, no.~1, pp. 1--10, 2018.

\bibitem{moh19}
N.~Mohammadi~Estakhri, B.~Edwards, and N.~Engheta, ``Inverse-designed metastructures that solve equations,'' \emph{Science}, vol. 363, no. 6433, pp. 1333--1338, 2019.

\bibitem{cam21}
M.~Camacho, B.~Edwards, and N.~Engheta, ``A single inverse-designed photonic structure that performs parallel computing,'' \emph{Nature Communications}, vol.~12, no.~1, p. 1466, 2021.

\bibitem{sol22}
J.~Sol, D.~R. Smith, and P.~Del~Hougne, ``Meta-programmable analog differentiator,'' \emph{Nature Communications}, vol.~13, no.~1, p. 1713, 2022.

\bibitem{raj21}
H.~Rajabalipanah, A.~Abdolali, S.~Iqbal, L.~Zhang, and T.~J. Cui, ``Analog signal processing through space-time digital metasurfaces,'' \emph{Nanophotonics}, vol.~10, no.~6, pp. 1753--1764, 2021.

\bibitem{kwo18}
H.~Kwon, D.~Sounas, A.~Cordaro, A.~Polman, and A.~Al{\`u}, ``Nonlocal metasurfaces for optical signal processing,'' \emph{Physical review letters}, vol. 121, no.~17, p. 173004, 2018.

\bibitem{sha23}
K.~Shastri and F.~Monticone, ``Nonlocal flat optics,'' \emph{Nature Photonics}, vol.~17, no.~1, pp. 36--47, 2023.

\bibitem{del18}
P.~del Hougne and G.~Lerosey, ``Leveraging chaos for wave-based analog computation: Demonstration with indoor wireless communication signals,'' \emph{Physical Review X}, vol.~8, no.~4, p. 041037, 2018.

\bibitem{zhu17}
T.~Zhu, Y.~Zhou, Y.~Lou, H.~Ye, M.~Qiu, Z.~Ruan, and S.~Fan, ``Plasmonic computing of spatial differentiation,'' \emph{Nature communications}, vol.~8, no.~1, p. 15391, 2017.

\bibitem{wan22}
Z.~Wang, G.~Hu, X.~Wang, X.~Ding, K.~Zhang, H.~Li, S.~N. Burokur, Q.~Wu, J.~Liu, J.~Tan \emph{et~al.}, ``Single-layer spatial analog meta-processor for imaging processing,'' \emph{Nature communications}, vol.~13, no.~1, p. 2188, 2022.

\bibitem{mom23}
A.~Momeni, B.~Rahmani, M.~Mall{\'e}jac, P.~del Hougne, and R.~Fleury, ``Backpropagation-free training of deep physical neural networks,'' \emph{Science}, vol. 382, no. 6676, pp. 1297--1303, 2023.

\bibitem{wu21-1}
C.~Wu, H.~Yu, S.~Lee, R.~Peng, I.~Takeuchi, and M.~Li, ``Programmable phase-change metasurfaces on waveguides for multimode photonic convolutional neural network,'' \emph{Nature communications}, vol.~12, no.~1, p.~96, 2021.

\bibitem{liu22}
C.~Liu, Q.~Ma, Z.~J. Luo, Q.~R. Hong, Q.~Xiao, H.~C. Zhang, L.~Miao, W.~M. Yu, Q.~Cheng, L.~Li \emph{et~al.}, ``A programmable diffractive deep neural network based on a digital-coding metasurface array,'' \emph{Nature Electronics}, vol.~5, no.~2, pp. 113--122, 2022.

\bibitem{li22}
L.~Li, H.~Zhao, C.~Liu, L.~Li, and T.~J. Cui, ``Intelligent metasurfaces: Control, communication and computing,'' \emph{Elight}, vol.~2, no.~1, p.~7, 2022.

\bibitem{yan23}
B.~Yang, X.~Cao, J.~Xu, C.~Huang, G.~C. Alexandropoulos, L.~Dai, M.~Debbah, H.~V. Poor, and C.~Yuen, ``Reconfigurable intelligent computational surfaces: When wave propagation control meets computing,'' \emph{IEEE Wireless Commun.}, vol.~30, no.~3, pp. 120--128, 2023.

\bibitem{oma25}
Z.~R. Omam, H.~Taghvaee, A.~Araghi, M.~Garcia-Fernandez, G.~Alvarez-Narciandi, G.~C. Alexandropoulos, O.~Yurduseven, and M.~Khalily, ``Holographic metasurfaces enabling wave computing for {6G}: Status overview, challenges, and future research trends,'' \emph{arXiv preprint arXiv:2501.05173}, 2025.

\bibitem{zho19}
Z.~Sun, G.~Pedretti, E.~Ambrosi, A.~Bricalli, W.~Wang, and D.~Ielmini, ``Solving matrix equations in one step with cross-point resistive arrays,'' \emph{Proceedings of the National Academy of Sciences}, vol. 116, no.~10, pp. 4123--4128, 2019.

\bibitem{sun22}
Z.~Sun and D.~Ielmini, ``Invited tutorial: Analog matrix computing with crosspoint resistive memory arrays,'' \emph{IEEE Trans. Circuits Syst. II: Exp. Briefs}, vol.~69, no.~7, pp. 3024--3029, 2022.

\bibitem{man23}
P.~Mannocci and D.~Ielmini, ``A generalized block-matrix circuit for closed-loop analog in-memory computing,'' \emph{IEEE J. Explor. Solid-State Comput. Devices Circuits}, vol.~9, no.~1, pp. 47--55, 2023.

\bibitem{man23b}
P.~Mannocci, E.~Melacarne, A.~Pezzoli, G.~Pedretti, C.~Villa, F.~Sancandi, U.~Spagnolini, and D.~Ielmini, ``An {SRAM}-based reconfigurable analog in-memory computing circuit for solving linear algebra problems,'' in \emph{2023 International Electron Devices Meeting (IEDM)}, 2023.

\bibitem{pan24}
L.~Pan, P.~Zuo, Y.~Luo, Z.~Sun, and R.~Huang, ``{BlockAMC}: Scalable in-memory analog matrix computing for solving linear systems,'' in \emph{2024 Design, Automation \& Test in Europe Conference \& Exhibition (DATE)}, 2024.

\bibitem{son24}
W.~Song, M.~Rao, Y.~Li, C.~Li, Y.~Zhuo, F.~Cai, M.~Wu, W.~Yin, Z.~Li, Q.~Wei \emph{et~al.}, ``Programming memristor arrays with arbitrarily high precision for analog computing,'' \emph{Science}, vol. 383, no. 6685, pp. 903--910, 2024.

\bibitem{bur17}
G.~W. Burr, R.~M. Shelby, A.~Sebastian, S.~Kim, S.~Kim, S.~Sidler, K.~Virwani, M.~Ishii, P.~Narayanan, A.~Fumarola \emph{et~al.}, ``Neuromorphic computing using non-volatile memory,'' \emph{Advances in Physics: X}, vol.~2, no.~1, pp. 89--124, 2017.

\bibitem{yu18}
S.~Yu, ``Neuro-inspired computing with emerging nonvolatile memorys,'' \emph{Proc. IEEE}, vol. 106, no.~2, pp. 260--285, 2018.

\bibitem{hae19}
W.~Haensch, T.~Gokmen, and R.~Puri, ``The next generation of deep learning hardware: Analog computing,'' \emph{Proc. IEEE}, vol. 107, no.~1, pp. 108--122, 2019.

\bibitem{li18}
C.~Li, M.~Hu, Y.~Li, H.~Jiang, N.~Ge, E.~Montgomery, J.~Zhang, W.~Song, N.~D{\'a}vila, C.~E. Graves \emph{et~al.}, ``Analogue signal and image processing with large memristor crossbars,'' \emph{Nature electronics}, vol.~1, no.~1, pp. 52--59, 2018.

\bibitem{man22}
P.~Mannocci, E.~Melacarne, and D.~Ielmini, ``An analogue in-memory ridge regression circuit with application to massive {MIMO} acceleration,'' \emph{IEEE J. Emerg. Sel. Top. Circuits Syst.}, vol.~12, no.~4, pp. 952--962, 2022.

\bibitem{wan23}
C.~Wang, G.-J. Ruan, Z.-Z. Yang, X.-J. Yangdong, Y.~Li, L.~Wu, Y.~Ge, Y.~Zhao, C.~Pan, W.~Wei \emph{et~al.}, ``Parallel in-memory wireless computing,'' \emph{Nature Electronics}, vol.~6, no.~5, pp. 381--389, 2023.

\bibitem{poz11}
D.~M. Pozar, \emph{Microwave engineering}.\hskip 1em plus 0.5em minus 0.4em\relax John Wiley \& Sons, 2011.

\bibitem{she20}
S.~Shen, B.~Clerckx, and R.~Murch, ``Modeling and architecture design of reconfigurable intelligent surfaces using scattering parameter network analysis,'' \emph{IEEE Trans. Wireless Commun.}, vol.~21, no.~2, pp. 1229--1243, 2022.

\bibitem{lu02}
T.-T. Lu and S.-H. Shiou, ``Inverses of 2$\times$2 block matrices,'' \emph{Computers \& Mathematics with Applications}, vol.~43, no. 1-2, pp. 119--129, 2002.

\bibitem{xi24}
F.~Xia, K.~Kim, Y.~Eliezer, S.~Han, L.~Shaughnessy, S.~Gigan, and H.~Cao, ``Nonlinear optical encoding enabled by recurrent linear scattering,'' \emph{Nature Photonics}, vol.~18, no.~10, pp. 1067--1075, 2024.

\bibitem{yi24}
M.~Yildirim, N.~U. Dinc, I.~Oguz, D.~Psaltis, and C.~Moser, ``Nonlinear processing with linear optics,'' \emph{Nature Photonics}, vol.~18, no.~10, pp. 1076--1082, 2024.

\bibitem{mcm23}
P.~L. McMahon, ``The physics of optical computing,'' \emph{Nature Reviews Physics}, vol.~5, no.~12, pp. 717--734, 2023.

\bibitem{kay93}
S.~M. Kay, \emph{Fundamentals of statistical signal processing: Estimation theory}.\hskip 1em plus 0.5em minus 0.4em\relax Prentice-Hall, Inc., 1993.

\bibitem{kal60}
R.~E. Kalman, ``A new approach to linear filtering and prediction problems,'' \emph{Journal of Basic Engineering}, vol.~82, no.~1, pp. 35--45, 03 1960.

\bibitem{bis01}
G.~Bishop, G.~Welch \emph{et~al.}, ``An introduction to the {Kalman} filter,'' \emph{Proc of SIGGRAPH, Course}, vol.~8, no. 27599-23175, p.~41, 2001.

\bibitem{din12}
D.~H. Dini and D.~P. Mandic, ``Class of widely linear complex {Kalman} filters,'' \emph{IEEE Trans. Neural Netw. Learn. Syst.}, vol.~23, no.~5, pp. 775--786, 2012.

\bibitem{mil24}
A.~Milstein, G.~Revach, H.~Deng, H.~Morgenstern, and N.~Shlezinger, ``Neural augmented {Kalman} filtering with {Bollinger} bands for pairs trading,'' \emph{IEEE Trans. Signal Process.}, vol.~72, pp. 1974--1988, 2024.

\bibitem{del25}
P.~del Hougne, ``A physics-compliant diagonal representation for wireless channels parametrized by beyond-diagonal reconfigurable intelligent surfaces,'' \emph{IEEE Trans. Wireless Commun.}, vol.~24, no.~7, pp. 5871--5884, 2025.

\bibitem{far88}
R.~W. Farebrother, \emph{Linear least squares computations}, ser. Statistics: A series of textbooks and monographs.\hskip 1em plus 0.5em minus 0.4em\relax Taylor \& Francis, 1988.

\end{thebibliography}

\end{document}